\newtheorem{thm}{Theorem}
\newtheorem{lem}[thm]{Lemma}
\newtheorem{prop}[thm]{Proposition}
\newtheorem{cor}[thm]{Corollary}
\newcommand{\defeq}{\stackrel{\mathrm{def}}=}
\newcommand{\ccP}{\mathsf{P}}
\newcommand{\BQP}{\mathsf{BQP}}
\newcommand{\shP}{\mathsf{\#P}}
\newcommand{\NP}{\mathsf{NP}}
\newcommand{\poly}{\operatorname{poly}}
\newcommand{\cC}{\mathcal{C}}
\begin{document}
\title{Topological quantum computation is hyperbolic}
\author{Eric Samperton}
\address{Departments of Mathematics and Computer Science, 150 N University St, Purdue University, West Lafayette, Indiana, 47907}
\thanks{Supported in part by NSF grant DMS \#2038020. \url{eric@purdue.edu}}

\date{May 5, 2023}
\begin{abstract}
We show that a topological quantum computer based on the evaluation of a Witten-Reshetikhin-Turaev TQFT invariant of knots can always be arranged so that the knot diagrams with which one computes are diagrams of hyperbolic knots.  The diagrams can even be arranged to have additional nice properties, such as being alternating with minimal crossing number.  Moreover, the reduction is polynomially uniform in the self-braiding exponent of the coloring object.  Various complexity-theoretic hardness results regarding the calculation of quantum invariants of knots follow as corollaries.  In particular, we argue that the hyperbolic geometry of knots is unlikely to be useful for topological quantum computation.
\end{abstract}
\maketitle

\section{Introduction}
\label{s:intro}
Topology, quantum computing, and condensed matter theory have had significant interactions in recent decades, one of the most notable being topological quantum computation.  Freedman, Kitaev, Larsen and Wang established this quantum computing paradigm by showing that there exist unitary topological quantum field theories (TQFTs) based on evaluations of the Jones polynomial at roots of unity for which certain approximations of knot and link invariants are equivalent in power to the usual quantum circuit model of $\BQP$ \cite{FreedmanLarsenWang:universal,FreedmanLarsenWang:two,FreedmanKitaevWang:simulation}.  On the other hand, topology and geometry---especially \emph{hyperbolic} geometry---also have significant interactions in low dimensions, thanks to pioneering work of Thurston and many others.  In particular, there are various conjectural relationships expected to connect the hyperbolic geometry of a knot with its TQFT invariants, the volume conjecture of Kashaev, Murakami and Murakami being perhaps the most infamous \cite{Kashaev:dilogarithm,MurakamiMurakami:volume}.  Moreover, in a slightly different context, Gromov hyperbolicity is known to have substantial algorithmic implications in the theory of finitely presented groups.

Idly combining these various lines of thought leads to a question: Can the hyperbolic geometry of knots be used as a resource to speed-up topological quantum computations?

\subsection{Main results}
\label{ss:main}
We argue \emph{no}.  To this end, we first show that any topological quantum computation can be arranged so that all of the knots one uses to perform quantum calculations are guaranteed to be hyperbolic knots, a fact we might glibly call ``hyperbolic quantum computation."\footnote{To be clear, we have in mind a strict notion of ``topological quantum computation" along the lines of the original papers \cite{FreedmanLarsenWang:universal,FreedmanLarsenWang:two,FreedmanKitaevWang:simulation} that means ``approximate a TQFT invariant of a knot or link inside a closed 3-manifold."  There are now more wide-ranging ideas of topological quantum computation that allow for invariants of colored, trivalent ribbon graphs (meaning projective measurements are performed during the computation), adaptive topological charge measurements (in which the amplitudes with which one computes are topologically protected but are not exactly topological invariants of 3-dimensional objects), or the braiding of Majorana zero modes.  We shall make no attempt to address these broader paradigms.}
In mathematical terms, we show that for any fixed Witten-Reshetikhin-Turaev TQFT invariant of ribbon knots $Z$, there is a polynomial time algorithm that replaces a given knot diagram $K$ with a particularly nice kind of diagram of a hyperbolic knot $K'$ so that $Z(K)$ and $Z(K')$ are essentially equal.  In fact, we have in mind two different ways in which a diagram can be ``particularly nice," and our first two results make each of them precise.

\begin{thm}
Fix a modular fusion category $\cC$ and an object $V$ with a scalar twist $\theta_V$. Let $Z$ be the Witten-Reshetikhin-Turaev invariant of oriented ribbon knots colored by $V$.  Then there exists a classical polynomial time algorithm that converts a given oriented ribbon knot diagram $K$ into a diagram $K'$ such that
\[ Z(K') = \theta_V^{r(K)}Z(K)\]
where $r(K)$ is a polynomial-time computable integer, and $K'$ has the follow additional properties:
\begin{itemize}
\item $K'$ is alternating,
\item $K'$ has minimal crossing number (over all diagrams of equivalent knots), and
\item $K'$ is a hyperbolic knot.
\end{itemize}
If we also allow $\cC$ and $V$ to vary, then the reduction runs in time that is jointly polynomial in the self-braiding exponent $e(V,V)$ and the crossing number of $K$.
\label{thm:reduction1}
\end{thm}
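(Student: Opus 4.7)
My plan has three main steps: (i) a local substitution that flips a chosen crossing while modifying $Z$ by a known factor; (ii) applying this substitution to make the diagram alternating; and (iii) post-processing to guarantee minimal crossing number and hyperbolicity. For (i), in a modular fusion category every simple object has twist a root of unity (Vafa's theorem), so the braiding $R\colon V\otimes V \to V\otimes V$ has finite order modulo scalars: there exist integers $e, m$ with $e$ bounded by the self-braiding exponent $e(V,V)$ such that $R^{2e} = \theta_V^{m}\cdot\operatorname{id}$. Rearranging, a single positive crossing equals, as a local operator on the strand space, a vertical stack of $2e-1$ negative crossings times a power of $\theta_V$. Performing this substitution at one chosen crossing flips its sign, changes the underlying knot type, and multiplies $Z$ by an explicit power of $\theta_V$, which we can record.

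For (ii), I would checkerboard 2-color the complementary faces of the 4-valent planar shadow of $K$ (always possible since the shadow is Eulerian), which determines the unique alternating over/under pattern compatible with that shadow. The crossings of $K$ disagreeing with this pattern are the ``bad'' crossings; flipping each via (i) yields an alternating diagram $K_1$ with $Z(K_1) = \theta_V^{r_1(K)} Z(K)$ and at most $O(e(V,V)\cdot\mathrm{cr}(K))$ crossings. Each new stack of same-sign crossings is itself alternating and its new faces extend the checkerboard coloring consistently, so $K_1$ really is an alternating diagram.

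For (iii), I would invoke classical results on alternating knots. By the Tait flyping conjectures (Kauffman--Murasugi--Thistlethwaite, Menasco--Thistlethwaite), a reduced alternating diagram achieves the minimum crossing number of its knot, so any nugatory crossings introduced in (ii) can be detected and removed in polynomial time via Reidemeister I moves, each contributing another $\theta_V^{\pm 1}$ factor which is absorbed into the final $r(K)$. By Menasco's hyperbolicity theorem, a prime, reduced alternating knot diagram that is not a standard $(2,q)$-torus diagram represents a hyperbolic knot, so it suffices to certify primality and rule out the torus exceptions. To do this I would preprocess $K$ by taking a connect-sum (or band-sum) with a fixed small hyperbolic alternating ``seed'' such as the figure-eight knot before running (i)--(ii), ensuring sufficient combinatorial complexity: primality can then be verified via the absence of a separating circle meeting the diagram in two points, and torus knots can be ruled out via a combinatorial lower bound on, for example, the Turaev genus or the span of the Jones polynomial.

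The hardest step will be (iii). The stacks of crossings produced in (i) can plausibly create new nugatory crossings or even a connect-sum decomposition in $K_1$, so the cleaning step must be designed carefully and in tandem with the seed-tangle preprocessing so that nothing the cleanup removes destroys the alternating structure or, worse, forces us back into a torus-knot family. A subtle edge case is when $K$ is itself the unknot or a small torus knot: here $Z(K)$ is essentially a single power of $\theta_V$, yet we must still output a genuinely hyperbolic alternating witness, which is precisely where the preprocessing seed becomes indispensable.
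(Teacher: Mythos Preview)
Your steps (i) and (ii) are essentially the paper's approach: Vafa's theorem lets you replace a single crossing by $2e-1$ crossings of the opposite sign, and applying this at the ``bad'' crossings (relative to the checkerboard coloring) yields an alternating diagram with $Z$ changed only by a known power of $\theta_V$. (In fact the paper uses that $br_{V,V}^2$ itself has finite order, so $m=0$ and no $\theta_V$ factor appears at this stage; but your weaker version is harmless.) The subsequent removal of nugatory crossings is also the same.

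The genuine gap is in (iii). Preprocessing by connect-summing with a figure-eight seed does the opposite of what you want. If $K$ is nontrivial then $K\#4_1$ is composite, hence its complement contains an essential torus and is \emph{never} hyperbolic. Worse, steps (i)--(ii) do not change the underlying $4$-valent shadow, so the $2$-edge cut at the connect-sum site survives intact; your alternating output is still diagrammatically composite, and by Menasco's theorem it is topologically composite as well. Your own primality check (``absence of a separating circle meeting the diagram in two points'') will therefore fail whenever $K$ is nontrivial. There is also an invariant problem: $Z(K\#4_1)=Z(K)\cdot Z(4_1)/Z(\mathrm{unknot})$, and $Z(4_1)/Z(\mathrm{unknot})$ is in general not a power of $\theta_V$, so it cannot be absorbed into $r(K)$.

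What the paper does instead is to reuse the Vafa padding to \emph{destroy} compositeness rather than to add a seed. After reducing, one computes the diagrammatic prime decomposition $K_2=P_1\#\cdots\#P_k$ (for reduced alternating diagrams, diagrammatic and topological primality coincide by Menasco). At each connect-sum site one inserts a block of $\pm 2T$ crossings between the two outgoing strands, with the sign chosen to keep the diagram alternating; this preserves $Z$ exactly and yields a reduced, prime, alternating diagram $K_3$. Menasco's trichotomy then says $K_3$ is hyperbolic, a $(2,p)$ torus knot, or the trivial diagram. In the two exceptional cases the paper writes down an explicit highly-twisted standard $3$-plat diagram (with enough rows that the Johnson--Moriah bridge-distance theorem forces hyperbolicity), at the cost of one or two additional $\theta_V$ factors from Reidemeister~I moves. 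No seed is needed.
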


\begin{thm}
Fix a modular fusion category $\cC$ and an object $V$ with a scalar twist $\theta_V$.  Let $Z$ be the Witten-Reshetikhin-Turaev invariant of oriented ribbon knots colored by $V$.  Then there exists a classical polynomial time algorithm that converts a given oriented ribbon knot diagram $K$ into a diagram $K'$ such that $Z(K') = Z(K)$, and $K'$ has the following additional properties:
\begin{itemize}
\item $K'$ is in bridge position,
\item the distance of the induced bridge sphere of $K'$ is known,
\item the bridge number $b(K')$ of $K'$ is minimal (over all diagrams of equivalent knots) and the induced bridge sphere of $K'$ is the unique minimal bridge sphere (up to isotopy), and
\item $K'$ is a hyperbolic knot.
\end{itemize}
If we also allow $\cC$ and $V$ to vary, then the reduction runs in time that is jointly polynomial in the self-braiding exponent $e(V,V)$ and the crossing number of $K$.
\label{thm:reduction2}
\end{thm}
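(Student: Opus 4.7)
I would work in plat (bridge) position throughout.  Given a diagram of $K$, a standard polynomial-time combinatorial procedure presents $K$ as the plat closure of a braid $\beta \in B_{2n}$ for some $n$---equivalently, as the union of two trivial $n$-string tangles glued along a bridge sphere $\Sigma$ by a mapping class $\varphi \in \mathrm{MCG}(\Sigma_{0,2n})$ determined by $\beta$.  The invariant $Z(K)$ is then the pairing, in the space of conformal blocks on $\Sigma_{0,2n}$, of the two vectors determined by the trivial tangles, twisted by the image of $\varphi$ under the WRT quantum representation $\rho$ of $\cC$.  This is the setup I would manipulate.

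The key move is to modify $\beta$ by inserting a braid $\tau \in \ker(\rho)$, so that $Z(K')$ equals $Z(K)$ \emph{exactly}---in contrast with Theorem~\ref{thm:reduction1}, no power of $\theta_V$ is picked up.  The self-braiding exponent $e(V,V)$ controls how large a power of a full twist (or analogous canonical element) lands in $\ker(\rho)$, giving an explicit supply of invariant-preserving words of length polynomial in $e(V,V)$.  My plan is to choose $\tau$ so that the altered gluing $\tau \cdot \varphi$ has provably large translation length on the curve complex of $\Sigma_{0,2n}$: concretely, to build $\tau$ as a product of $e(V,V)$-multiples of Dehn twists along a chain of curves with pairwise-large curve-complex distance, and then invoke a Masur--Minsky style estimate to bound the bridge distance $d(\Sigma)$ of the resulting bridge surface from below by an explicit, computable function of the construction.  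The distance is ``known'' because it is engineered.

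Once $d(\Sigma)$ is made sufficiently large, the remaining bulleted conclusions fall out of established high-distance technology applied to $S^3\setminus K'$ with its tangle decomposition along $\Sigma$.  Hartshorn's theorem, combined with Thurston's hyperbolization, rules out essential tori, annuli, and small Seifert-fiberings in the exterior, yielding hyperbolicity of $K'$.  Tomova's theorem (and refinements by Bachman--Schleimer and Johnson) then yields uniqueness of $\Sigma$ as a minimal bridge surface, and equality of $b(K')$ with the minimum bridge number of the underlying knot, provided $d(\Sigma)$ exceeds a known constant multiple of $n$.  Setting the required threshold in the previous step produces all four properties simultaneously.

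The main obstacle I anticipate is the second paragraph: producing elements of $\ker(\rho)$ that are simultaneously geometrically generic enough to raise the bridge distance past the Hartshorn--Tomova thresholds, while keeping bit-complexity polynomial in $e(V,V)$ and the crossing number of $K$.  Existence is plausible because, once $e(V,V)$ is fixed, $\rho$ factors through an essentially finite quotient of $\mathrm{MCG}(\Sigma_{0,2n})$, so ``most'' kernel elements should be pseudo-Anosov with controlled translation length; turning this heuristic into an explicit, efficient, and quantitative construction---with the bridge distance both lower-bounded and computable from the recipe for $\tau$---is the technical heart of the argument.
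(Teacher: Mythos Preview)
Your overall skeleton matches the paper's: work in plat position, insert elements of the kernel of the quantum representation using Vafa's theorem so that $Z$ is preserved exactly, then invoke high-bridge-distance technology (Tomova for uniqueness of the minimal bridge sphere, Thurston hyperbolization via atoroidality/an-annularity) to secure the remaining bullets.  Where you diverge is in the \emph{mechanism} for producing large, known bridge distance, and this is precisely where the paper is much simpler than what you propose.

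The paper does not use generic kernel elements, Masur--Minsky translation-length estimates, or pseudo-Anosov heuristics at all.  Instead, it first puts $K$ into \emph{standard $m$-plat form} with $m\ge 3$ and at least $n>4m(m-2)$ rows (by Reidemeister 1/2 moves and inserting rows with zero twist coefficients), and then simply pads \emph{every} twist coefficient $a_{i,j}$ by $\pm 2T$, where $2T\ge e(V,V)$.  The resulting diagram is ``highly twisted'' in the sense of Johnson--Moriah, and their theorem gives the bridge distance \emph{exactly}: $d(K',\Sigma)=\lceil n/(2(m-2))\rceil$.  So the distance is known not because it is bounded by a curve-complex argument, but because there is a closed-form formula for it.  Tomova and Thurston then finish as you say.

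The ``main obstacle'' you flag---engineering kernel elements that are geometrically generic, with bridge distance both lower-bounded \emph{and computable} from the construction---is real for your route, and the paper sidesteps it entirely by appealing to Johnson--Moriah.  Your Masur--Minsky approach would at best give coarse bounds on $d(\Sigma)$, not the exact value the theorem claims, and making those bounds effective and polynomial-time computable would require substantial additional work.  The moral: the only kernel elements you need are the most na\"ive ones, namely $\sigma_i^{\pm 2T}$ inserted into every twist box of a standard plat, and the Johnson--Moriah distance formula for highly twisted plats is the key external input you are missing.
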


Several minor comments are in order.  First, an object of $\cC$ has a scalar twist exactly when it is a direct sum of simple objects whose twists are equal to each other; in particular, simple objects themselves have scalar twists.  Second, the \emph{self-braiding exponent} $e(V,V)$ of $V$ is the order of the squared braiding isomorphism $br_{V,V}^2 = br_{V,V} \circ br_{V,V}$; see Section \ref{ss:vafa} for more details.  Third, for a discussion of why it is helpful to know either a minimal crossing number or bridge number diagram of a knot, see the last paragraph of Section \ref{ss:takeaways}.  Fourth, we conjecture that there exists a single reduction that builds a knot $K'$ with a combination of \emph{all} of the properties in both Theorems \ref{thm:reduction1} and \ref{thm:reduction2}.  Finally, we also expect that there is an analog of Theorem \ref{thm:reduction2} for closed, triangulated 3-manifolds.

Complexity-theoretic hardness results---both quantum and classical---follow from Theorems \ref{thm:reduction1} and \ref{thm:reduction2} as easy corollaries.  Roughly speaking, each theorem implies that whenever a Witten-Reshetikhin-Turaev invariant $Z$ is known to be hard to compute (or approximate) on knots, then the same hardness result persists even on the restricted class of nice knot diagrams indicated.  We list some concrete applications of this principle here:

\begin{cor}
Fix $i=1$ or $2$.  Let $K$ be a diagram of a knot in $S^3$, treated as computational input, and suppose as a promise that $K$ satisfies all of the listed properties in the conclusion of Theorem $i$.   Then each of the following previously known hardness results continues to hold for such restricted diagrams:
\begin{enumerate}
\item Fix a principal root of unity $t$ that is not of order $1,2,3,4$ or $6$, and let $V(K,t)$ be the value of the Jones polynomial of $K$ at $t$.  Then it is $\BQP$-hard to ``additively approximate" the complex number $V(K,t)$.  More precisely, given a quantum circuit $C$ over some fixed, finite gate set and a fixed error $\epsilon>0$, there is a classical polynomial time reduction that encodes $C$ as a knot diagram $K_{C}$ such that
\begin{equation*}
\left| \mathbb{P}(C \text{ accepts on } \vec{0}) - \left|\frac{V(K_C,t)}{(t^{1/2}+t^{-1/2})^{b(K_C)-1}}\right|^2\right| < \epsilon. \tag{$\star$} \label{e:additive}
\end{equation*}
\item Fix a principal root of unity $t$ that is not of order $1,2,3,4$ or $6$.  Let $0<a<b$ be two positive real numbers, and assume as a further promise that either $|V(K,t)| < a$ or $|V(K,t)|>b$.  Then it is $\shP$-hard, in the sense of Cook-Turing reduction, to decide which inequality holds.
\item Fix a nonabelian finite simple group $G$ and a non-trivial conjugacy class $C \subset G$.  Orient $K$ and let $\gamma \in \pi_1(S^3 - K)$ be a meridian.  Then it is $\shP$-hard, via almost parsimonious reduction, to compute
\[ \#H(K,G,C) \defeq \#\{ \phi: \pi_1(S^3 - K) \to G \mid \phi(\gamma) \in C\}.\]
In particular, deciding when $\pi_1(S^3 - K)$ admits a homomorphism to $G$ with non-cyclic image and $\gamma$ mapping to $C$ is $\NP$-hard.
\end{enumerate}
\label{c:main}
\end{cor}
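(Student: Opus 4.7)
The plan for all three parts is uniform: take each previously established hardness reduction into an unrestricted knot diagram and post-compose it with the polynomial-time reduction furnished by Theorem \ref{thm:reduction1} or Theorem \ref{thm:reduction2}, according to whether $i=1$ or $i=2$. The composition is polynomial time and has image in the claimed class of nice diagrams, so the only real content is to check that the change in $Z$ incurred by Theorem $i$---multiplication by $\theta_V^{r(K)}$ for $i=1$, or exact equality for $i=2$---does not spoil the property being reduced to.

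For parts (1) and (2), I would fix $\cC$ and $V$ so that the WRT invariant $Z$ recovers the Jones polynomial at $t$ with its standard bridge-number normalization (i.e.\ $V$ the two-dimensional simple object in the $\mathfrak{su}(2)$ modular category at the appropriate level). Because $V$ lives in a unitary modular fusion category, $\theta_V$ is a root of unity, so $|\theta_V^{r(K)}|=1$. The inequality (\ref{e:additive}) and the gap promise $|V(K,t)|<a$ vs.\ $|V(K,t)|>b$ in (2) depend only on $|V(K,t)|$, and both survive the composition unchanged. The factor $(t^{1/2}+t^{-1/2})^{b(K_C)-1}$ in (\ref{e:additive}) is handled because the bridge number of the output is either reported directly by Theorem \ref{thm:reduction2} or can be extracted in polynomial time from the alternating minimal-crossing diagram produced by Theorem \ref{thm:reduction1}.

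For part (3), I would take $\cC=\mathrm{Rep}\,D(G)$, the modular category associated to the Drinfeld double of $G$, and let $V$ be the simple object indexed by $(C,\mathbf{1})$---the conjugacy class $C$ together with the trivial representation of its centralizer. A direct computation (or the standard Dijkgraaf--Witten calculus) gives $\theta_V=1$, so Theorem $i$ yields an exact equality $Z(K')=Z(K)$; moreover $Z(K)$ equals $\#H(K,G,C)$ up to a positive scalar depending only on $G$ and $C$ that is straightforward to invert. Composing with a previously known almost-parsimonious reduction into $\#H(\cdot,G,C)$ therefore preserves almost-parsimoniousness, giving $\shP$-hardness. The $\NP$-hardness of the non-cyclic-image decision problem then follows by a standard thresholding step: homomorphisms with cyclic image factor through $H_1(S^3-K)\cong\mathbb{Z}$ and so are enumerable in polynomial time, so their contribution to $\#H(K,G,C)$ can be subtracted off and the remaining count detects non-cyclic-image solutions.

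The main technical obstacle---and the reason the ``joint polynomial in $e(V,V)$'' clause of Theorems \ref{thm:reduction1} and \ref{thm:reduction2} is required---arises in (3), where $(G,C)$ is part of the input and so $\cC$ and $V$ vary with the instance. One must bound $e(V,V)$ by a polynomial in $|G|$; this follows from the observation that for the simple object $(C,\mathbf{1})$ of $\mathrm{Rep}\,D(G)$ the squared braiding has order dividing a polynomial function of $|G|$ (essentially the order of any $g\in C$). Once this uniformity is in place, the corollary is a transparent composition of known hardness reductions with the two reductions of the present paper.
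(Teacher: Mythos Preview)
Your overall strategy---post-compose the known hardness reductions with Theorem~\ref{thm:reduction1} or~\ref{thm:reduction2} and observe that only $|Z|$ matters---is exactly the paper's argument, and your observations that $|\theta_V|=1$ in the unitary case and that $\theta_{(C,\mathbf 1)}=1$ in $\mathrm{Rep}\,D(G)$ are correct refinements.

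However, your final paragraph rests on a misreading of part~(3). The statement says ``\emph{Fix} a nonabelian finite simple group $G$ and a non-trivial conjugacy class $C\subset G$''; $G$ and $C$ are not part of the input, so $\cC$ and $V$ do \emph{not} vary with the instance. The paper makes this explicit: ``we fix \emph{one} such $Z$, we do not work uniformly in all of them.'' Hence the joint-polynomial-in-$e(V,V)$ clause of Theorems~\ref{thm:reduction1} and~\ref{thm:reduction2} plays no role in Corollary~\ref{c:main}; that uniformity is needed only for Corollary~\ref{c:uniform}. Your purported ``main technical obstacle'' is therefore a non-issue, and the bound on $e(V,V)$ in terms of $|G|$ is unnecessary here.

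A smaller point: your claim that for $i=1$ the bridge number ``can be extracted in polynomial time from the alternating minimal-crossing diagram'' is unsubstantiated---no such algorithm is cited, and none is obvious. You do not actually need it. The exponent in the normalization of \eqref{e:additive} can simply be taken to be the plat index of the intermediate diagram produced by the Freedman--Larsen--Wang/Kuperberg reduction \emph{before} Theorem~\ref{thm:reduction1} is applied; since $|V(\cdot,t)|$ is unchanged by Theorem~\ref{thm:reduction1}, the inequality \eqref{e:additive} survives with that (already known) exponent.
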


If we were to remove the listed promises on the diagram $K$, then parts (1), (2), and (3) in Corollary \ref{c:main} are almost verbatim the main results of \cite{FreedmanLarsenWang:two}, \cite{Kuperberg:Jones}, and \cite{me:coloring}, respectively.  The only difference is that \cite{FreedmanLarsenWang:two} proves the $\BQP$-hardness of additively approximating $V(K,t)$ for \emph{links} $K$, while \cite{Kuperberg:Jones} improves their result to \emph{knots} $K$.  The proof of Corollary \ref{c:main} is in Section \ref{ss:cmain}.

We note that Kuperberg conjectured that his results should hold for atoroidal knots \cite[Sec.~5.1]{Kuperberg:Jones}.  Part (2) of Corollary \ref{c:main} confirms this conjecture, because hyperbolic knots are atoroidal.

In particular, parts (1) and (2) of Corollary \ref{c:main} show that when a knot invariant can be used for universal quantum computation, then hyperbolic geometry can not generally be expected to be helpful for achieving significant qualitative improvements in the calculation of such invariants without violating various standard conjectures such as $\BQP \ne \ccP$ or $\ccP^{\shP} \ne \ccP$.  This justifies our claim that hyperbolic geometry is not useful for algorithmic purposes in topological quantum computation.

As explained in their proofs, the three parts of Corollary \ref{c:main} each correspond to a fixed choice of $\cC$ and $V$ in Theorems \ref{thm:reduction1} and \ref{thm:reduction2}.  By applying the uniformity statements in these theorems, we can further expect that whenever we know a \emph{uniform} hardness result for a family of modular fusion categories $\cC_i$ and objects $V_i$, then this uniform result should persist when restricted to the kinds of nice hyperbolic diagrams produced by our main theorems.  Here is a specific manifestation of what we have in mind:

\begin{cor}
Fix $i=1$ or $2$, and fix a polynomial $p(n)$ such that $p(n)>0$ for all integers $n>0$.  Let $K$ be a diagram of a knot in $S^3$, treated as computational input, and suppose as a promise that $K$ satisfies all of the listed properties in the conclusion of Theorem $i$.  Let $n$ be the crossing number of $K$ and let $t = e^{2 \pi i/p(n)}$ be a principle root of unity of order $p(n)$.  Then it is $\BQP$-hard to ``additively approximate" the complex number $V(K,t)$ in the sense of Equation \ref{e:additive} above.
\label{c:uniform}
\end{cor}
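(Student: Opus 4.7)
The plan is to combine the uniform Aharonov-Jones-Landau $\BQP$-hardness for additive approximations of the Jones polynomial (as extended from links to knots by Kuperberg) with the uniformity clause built into Theorem $i$. First, I would recall the uniform version of this hardness: for any polynomially bounded integer parameter $q$ and any quantum circuit $C$ of size $m$, there is a classical polynomial-time reduction producing a knot diagram $K_C^{(q)}$ of crossing number polynomial in $m$ and $q$ such that \ref{e:additive} holds for $K_C^{(q)}$ at $t = e^{2\pi i/q}$.

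Second, I would apply Theorem $i$ to $K_C^{(q)}$ with the modular fusion category $\cC$ and object $V$ appropriate to the $q$-th root of unity, so that $V$ has self-braiding exponent $\Theta(q)$. The uniformity clause of Theorem $i$ guarantees this step runs in time polynomial in $m$ and $q$. The output $K'$ enjoys all the properties listed in Theorem $i$, and satisfies $Z(K') = \theta_V^{r(K)} Z(K_C^{(q)})$ for a polynomial-time computable integer $r(K)$ (or simply $Z(K') = Z(K_C^{(q)})$ when $i=2$). Since $\theta_V^{r(K)}$ is a unit phase it leaves $|V(K', t)|^2$ unchanged, and the denominator $(t^{1/2}+t^{-1/2})^{b-1}$ in \ref{e:additive} is correctly handled using $b(K')$, which is efficiently computable (trivially for $i=2$ by the bridge position guarantee, and via classical facts about bridge numbers of alternating knots for $i=1$).

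Third, I would match the root-of-unity order $q$ to $p(n')$, where $n'$ is the crossing number of $K'$. Viewing $n' = N(m,q)$ as a polynomial in $m$ and $q$ determined by the combined reduction, the condition $q = p(N(m,q))$ is a polynomial equation in $q$ for each fixed $m$. I would resolve this by padding the input circuit $C$ with identity gates, which does not alter its acceptance probability, inflating $m$ until the equation admits an integer solution with $q$ polynomial in $m$. Using this $q$, the resulting $K'$ has crossing number $n'$ with $p(n') = q$, and so \ref{e:additive} at $t = e^{2\pi i/p(n')}$ encodes the acceptance probability of $C$ as required.

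The hard step is the third. Arranging $q = p(N(m,q))$ to admit integer solutions with $q$ and $m$ both polynomial in the original input size is delicate, because the output crossing number is rather rigidly determined by the intermediate constructions. This is most serious when $i=1$, where minimality of crossing number rules out naive Reidemeister padding of the diagram, and one must exploit freedom inside the reduction of Theorem \ref{thm:reduction1} (for example, by varying the number of crossings in the auxiliary alternating hyperbolic tangles used there) to realize a sufficiently dense range of admissible crossing numbers. The case $i=2$ is easier because only the bridge number is pinned down and crossings can be added more freely.
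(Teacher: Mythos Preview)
Your overall approach matches the paper's: invoke Aharonov--Arad (extended to knots via Kuperberg's pure-braid observation), note that $e(V,V) = O(\poly(k))$ for the defining object at a $k$-th root of unity, and then apply the uniformity clause of Theorem~$i$. The paper's own proof is extremely terse---essentially just those ingredients stated in a few sentences---and does \emph{not} explicitly address the crossing-number matching issue you isolate in your third step. You are right that this is where the content lies: after applying Theorem~$i$ the crossing number changes from $n$ to some $n'$, so the root-of-unity order must be $p(n')$ rather than $p(n)$, and these differ in general. So you have identified a point the paper glosses over.

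That said, your proposed resolution of the matching is itself incomplete. The fixed-point equation $q = p(N(m,q))$ need not have integer solutions for any given $m$, and padding the circuit to increase $m$ does not obviously help: if $N(m,q)$ grows at least linearly in $q$ (plausible, since both the Aharonov--Arad encoding and the Vafa padding inside Theorem~$i$ contribute factors polynomial in $q$), then for $p$ of degree $\ge 1$ the right-hand side $p(N(m,q))$ outruns $q$ and no solution exists regardless of $m$. A more robust tactic is to exploit freedom to pad the \emph{output} diagram rather than the circuit: in Theorem~2 one can insert additional highly-twisted rows or bump twist coefficients by further multiples of $2T$, realizing output crossing numbers in an arithmetic progression of step $O(q)$, and then argue that this progression meets $p^{-1}(q)$ for some $q$ polynomial in $m$. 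Even this requires a short argument you have not given. For $i=1$ your intuition that one must vary the internal padding of Theorem~\ref{thm:reduction1} is correct, but again the density claim needs to be made precise.

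One further gap: your assertion that the bridge number of an alternating knot is efficiently computable ``via classical facts'' is not, to my knowledge, an established theorem, and you should not lean on it for the $i=1$ normalization in \eqref{e:additive}. The cleaner route is to track the plat width of the original $K_C^{(q)}$ through the reduction---it is known by construction---and use the fact that $|Z(K')| = |Z(K_C^{(q)})|$, rather than trying to recover the intrinsic bridge number of $K'$ after the fact.
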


Corollary \ref{c:uniform} simply comes from applying Theorems 1 and 2 to the main result of \cite{AharonovArad:uniformity}.  See Section \ref{ss:uniformity} for the proof.

We now turn to three additional take-aways from our results.

\subsection{Additional take-aways}
\label{ss:takeaways}
First, there is a subtle proviso we should make about the phrase ``$K$ is hyperbolic," which stems from the fact that promises do not come with certificates.  Namely, when we say $K$ is promised to be a hyperbolic knot, this does not mean we necessarily have a description of a hyperbolic structure on its complement in hand, say, as an ideal triangulation equipped with an algebraic solution to Thurston's gluing equations.  Rather, the diagram $K$ will satisfy an easily-verifiable combinatorial condition that indirectly guarantees its hyperbolicity by combining work of Johnson and Moriah \cite{JohnsonMoriah:plat} with Thurston's hyperbolization theorem.  So it is still conceivable that a more explicit description of a hyperbolic structure could be exploited, for example, to decide if a hyperbolic knot's fundamental group admits a non-trivial homomorphism to the alternating group $A_5$ in polynomial time.  We consider this unlikely, since if this were the case then by part 3 of Corollary \ref{c:main} either computing hyperbolic structures on hyperbolic knots is $\NP$-hard, or $\ccP=\NP$.  Instead, we conjecture that the diagrams of hyperbolic knots that we construct all admit geometric triangulations constructible in polynomial time.  Recent work of Ham and Purcell on effective Dehn filling lends further evidence to this conjecture \cite{HamPurcell:twisted}.

Second, while we already have argued that hyperbolic geometry is not useful for algorithmic purposes in topological quantum computation, we note that we also have low expectations regarding the possibility of useful applications of ``hyperbolic quantum computation"  to the subject of (quantum) circuit complexity.  One reason is that there exist large quantum circuits that implement the same unitary as the trivial circuit, and our reduction sends these ``complicated-but-trivial" circuits to highly non-trivial knots.  This means the hyperbolic geometry of the encoding knot (more specifically, its volume) can not be used to derive lower bounds on circuit complexity, only upper bounds.

Third, from the perspective of practical algorithmic topology the most important part of Theorem \ref{thm:reduction1} is the promise that $K'$ has minimal crossing number.  Likewise, the most important part of Theorem \ref{thm:reduction2} is the promise that $K''$ has minimal bridge number.   Indeed, a common strategy when computing an invariant of a knot from a diagram $K$ is to first perform some heuristic preprocessing that replaces $K$ with a simpler diagram $K'$ of the same knot.  Depending on the specifics of the implementation, the diagram $K'$ might be simpler because it has smaller crossing number than $K$, or it might be ``thinner" in various senses (e.g. smaller bridge number or tree width).  This strategy is often quite useful in practice, and results on the role of parameterized complexity in knot theory give some sense of the reasons why \cite{BurtonMariaSpreer:TV}.  However, unfortunately, Theorem \ref{thm:reduction1} implies that even with oracle access to a blackbox that replaces an input diagram with an equivalent minimal crossing diagram, computing a TQFT invariant still typically has $\shP$-hard worst case complexity.  Theorem \ref{thm:reduction2} provides a similar conclusion for oracle access to a blackbox that replaces a diagram with an equivalent minimal bridge number diagram.

\subsection{Related results}
In a related vein, our results and their proofs complement the ideas of Cui, Freedman and Wang in \cite{Freedman:axioms,CuiFreedmanWang:axioms}.  These papers begin with the observation that for a TQFT invariant of knots there exist, in addition to the usual Reidemeister moves, more severe diagrammatic simplifications one can make that change the topology of the knot represented by the diagram but nevertheless preserve the value of the invariant.  So one might try to perform a similar simplification strategy as in the previous paragraph using this larger set of moves.  The main result of \cite{CuiFreedmanWang:axioms} builds on the ideas of \cite{Freedman:axioms} to show that if the separation conjecture $\BQP \not\subseteq \mathsf{NDQC1}$ holds, then polynomially many applications of the simplifying moves can not always effect linear simplifications of diagrams.

Finally, we note that the early work \cite{JaegerVertiganWelsh:Jones} of Jaeger, Vertigan and Welsh shows that exact calculation of the value of the Jones polynomial at any $t$ that is not a root of unity of order 1, 2, 3, 4, or 6 is $\shP$-hard for alternating links.  They prove their result by first establishing hardness results for evaluations of the Tutte polynomial of planar graphs.  Then  they identify planar graphs with alternating diagrams of links via the Tait/checkerboard graph construction, and exploit elementary identities between the Tutte polynomial of the Tait graphs and the Jones polynomials of the associated link diagrams.  While their result does not make any promises about crossing number, hyperbolicity, having only one component, \emph{etc.}, we can imagine a strengthening of their result along these lines with a proof that builds on their techniques.  However, we believe our proof is conceptually much simpler, essentially only requiring Vafa's theorem for modular fusion categories and well-known results in combinatorial topology.  In particular, our technique applies to arbitrary Witten-Reshetikhin-Turaev invariants of knots and can be ``plugged into" different kinds of hardness results in a blackbox fashion.

\subsection{Notation caveats}
\label{ss:notation}
We warn the reader that we regularly abuse notation by letting $K$ denote both a knot and a diagram of that knot.  It should always be clear in context which we mean (almost always the latter).  ``Crossing number'' will always mean crossing number of a diagram.  We may also abuse language by saying ``knot" when strictly speaking, in the context of Witten-Reshetikhin-Turaev invariants, we ought to consider \emph{ribbon} knots.  To rectify this, we use the blackboard framing.

\subsection*{Acknowledgements}
We thank Chris Leininger for many helpful conversations about curve complexes.  We also thank Colleen Delaney for very helpful feedback on early drafts of this work, as well as Nathan Dunfield.

\subsection*{Data availability statement}
Data sharing is not applicable to this article as no datasets were generated or analysed during the current study.

\subsection*{Conflict of interest statement}
This work was supported by NSF DMS \#2038020.
The authors have no relevant financial or non-financial interests to disclose.

\section{Knot diagrams and hyperbolic geometry}
\label{s:diagrams}
A knot $K$ in the 3-sphere $S^3$ is \emph{hyperbolic} if there exists a complete hyperbolic structure on its complement $S^3 \setminus K$.  Such a structure can be described explicitly either by finding a triangulation of $S^3 \setminus K$ together with a solution to Thurston's gluing equations, or by finding a discrete and faithful representation $\pi_1(S^3 \setminus K) \to \mathrm{PSL}(2,\mathbb{C})$.  As mentioned in Section \ref{ss:takeaways}, explicit descriptions of hyperbolic structures are not necessary for our purposes.  Thanks to the central role of both knots and hyperbolic geometry throughout 3-manifold topology, there is an extensive literature on the problem of converting combinatorial properties of a knot diagram to hyperbolic geometric properties of the knot.  We make no attempt to summarize this large body of work, but focus on two well-studied such combinatorial properties: alternating diagrams and the \emph{bridge distance} of a plat diagram.  Nothing in this section is new and ribbon framings play no role.

\subsection{Alternating diagrams, nugatory crossings, and Menasco's theorems}
\label{ss:alternating}
A link diagram is called \emph{alternating} when traversing along the diagram results in a sequence of crossings that always alternate between over and under.  Thanks largely to theorems of Menasco \cite{Menasco:alternating} and the resolution of the Tait conjectures \cite{Kauffman:state,Murasugi:conjectures,Murasugi:conjecturesII,Thistlethwaite:tree,Thistlethwaite:Kauffman,MenascoThistlethwaite:flyping}, isotopy of alternating link diagrams is essentially entirely understood.  More precisely, the following appears to be either a folklore result or conjecture: given two $l$-component alternating link diagrams $K$ and $L$ with crossing numbers $m$ and $n$ (respectively), there exists an algorithm to decide if $K$ and $L$ represent isotopic links that runs in time $\poly(l,m,n)$.

As far as we are aware, neither the precise statement of this folklore claim nor its proof have appeared in the literature before, cf.\ \cite{Lackenby:private}.  We do not utilize this full result in the present work, and it would be too much of a digression to attempt to include a full proof.  However, the next two lemmas constitute the easiest, first steps in the algorithm, and we provide some comments at the end of this subsection to indicate how one might complete the algorithm.

A crossing in a knot diagram is called \emph{nugatory} if it looks like
\[ 
\begin{tikzpicture}[thick,scale=.5]
\draw[thick] (0,-1)--(-2,-1) -- (-2,1) -- (0,1) -- (0,-1);
\draw (-1,0) node {$A$};
\draw[thick] (4,-1)--(2,-1) -- (2,1) -- (4,1) -- (4,-1);
\draw (3,0) node {$B$};
\draw (0,.5) -- (2,-.5);
\draw[color=white,fill=white] (1,0) circle[radius=.2];
\draw (0,-.5) -- (2,.5);
\draw (5,0) node {or};
\begin{scope}[xshift=8cm]
\draw[thick] (0,-1)--(-2,-1) -- (-2,1) -- (0,1) -- (0,-1);
\draw (-1,0) node {$A$};
\draw[thick] (4,-1)--(2,-1) -- (2,1) -- (4,1) -- (4,-1);
\draw (3,0) node {$B$};
\draw (0,-.5) -- (2,.5);
\draw[color=white,fill=white] (1,0) circle[radius=.2];
\draw (0,.5) -- (2,-.5);
\end{scope}
\end{tikzpicture}
\]
where $A$ and $B$ are tangle sub-diagrams.  One of the Tait conjectures, independently proved by Kaufman \cite{Kauffman:state}, Murasugi \cite{Murasugi:conjectures} and Thistlethwaite \cite{Thistlethwaite:tree} (all using the Jones polynomial!), says that an alternating link diagram has minimal crossing number whenever it has no nugatory crossings.   A single nugatory crossing can be found and removed in time polynomial in the crossing number of $L$, so removing all of them takes at most polynomial time.  Indeed, if we compute either of the Tait/checkerboard graphs associated to $L$ (which can surely be done in polynomial time), then the nugatory crossings correspond precisely to isolated vertices and length 2 cycles enclosing a single crossing, both of which are easy to identify.  This proves:

\begin{lem}
An alternating link diagram $L$ can be reduced to a minimal crossing alternating diagram in polynomial time.
\label{l:reduced}
\end{lem}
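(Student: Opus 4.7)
The plan is to invoke the resolution of the Tait crossing number conjecture by Kauffman, Murasugi, and Thistlethwaite: an alternating link diagram realizes the minimal crossing number over its equivalence class if and only if it contains no nugatory crossings. So the task reduces to designing a polynomial-time procedure that eliminates all nugatory crossings from $L$ while maintaining alternation.

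First I would construct a Tait/checkerboard graph $G$ associated to $L$. One picks a checkerboard coloring of the complementary regions, places a vertex at each region of the chosen color, and draws one edge across each crossing connecting the two adjacent same-colored regions. This can be built in time $O(n)$ (where $n$ is the crossing number) by a single traversal of the diagram's combinatorial data.

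Next I would locate the nugatory crossings by scanning $G$ for the two combinatorial features that characterize them: an isolated vertex of $G$ (corresponding to a crossing where one of the tangles $A,B$ is empty, i.e. a Reidemeister-I-style twist) and a length-$2$ cycle in $G$ bounding a disk containing a single crossing (corresponding to a nontrivial nugatory configuration). Each such feature is detectable by elementary graph scans in time $\poly(n)$. For each nugatory crossing I then perform the obvious local modification on $L$: flip one of the tangle summands $A$ or $B$ and apply the associated Reidemeister-I untwisting, which removes exactly one crossing. The key observation—routine to verify—is that this local move preserves alternation, since it only affects the one crossing being deleted and reconnects the two ends into two simple arcs.

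Finally, I would iterate. Each round strictly decreases the crossing number by $1$, so the process terminates after at most $n$ rounds, each carried out in time $\poly(n)$, yielding an overall polynomial runtime. The output diagram $L'$ is alternating and nugatory-crossing-free, so by the Tait theorem it has minimal crossing number. The only non-routine ingredient is the Tait conjecture itself, which is invoked as a black box; the real content of the lemma is the algorithmic observation that nugatory crossings correspond to easily-found features of the Tait graph and can be undone without destroying alternation. The mildest obstacle is simply making the bijection between nugatory crossings and the two combinatorial features of $G$ precise, but this is classical and essentially visual.
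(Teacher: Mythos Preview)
Your proposal is correct and essentially identical to the paper's argument (which is the paragraph immediately preceding the lemma statement): invoke the Kauffman--Murasugi--Thistlethwaite resolution of the Tait conjecture, build the checkerboard graph, detect nugatory crossings as isolated vertices or length-$2$ cycles enclosing a single crossing, remove them one at a time, and iterate at most $n$ times. You add the explicit observation that removing a nugatory crossing preserves alternation, which the paper leaves implicit, but otherwise the two arguments match almost verbatim.
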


Any knot or link diagram is called \emph{reduced} if all nugatory crossings have been removed as above.  Thus, reduced alternating diagrams are all minimal-crossing.

A link is \emph{prime} if it can not be expressed as a connected sum of two non-trivial links.  A diagram $L$ is \emph{diagrammatically prime} if for each disk $D$ in the plane of the diagram with $\partial D$ meeting $L$ transversely in two non-crossing points, at least one of $D \cap L$ or $(S^3 \setminus D \cap L)$ consists of a single arc with no crossings.  Menasco proved that a non-split, reduced alternating link diagram represents a prime link if and only if it is diagrammatically prime \cite{Menasco:alternating}.  For \emph{reduced} alternating diagrams, non-trivial diagrammatic connected summations correspond exactly to length two cycles in the checkerboard graph.  An innermost such cycle can be identified in polynomial time, from which we may bubble off one prime, reduced summand; by induction on crossing number, we can bubble off all of the connected summands one-by-one in polynomial time.  This proves:

\begin{lem}
A reduced, alternating knot diagram $K$ can be identified as prime in polynomial time, and if $K$ is not prime, then a diagrammatic connected sum decomposition $K = \#_{i=1}^k P_i$ can be identified in polynomial time.  Here each $P_i$ is a non-trivial reduced, prime, alternating knot diagram.
\label{l:prime}
\end{lem}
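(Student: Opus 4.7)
The plan is to build directly on Menasco's theorem cited just above, together with a bit of planar combinatorics. Given a reduced alternating diagram $K$ with $n$ crossings, I would first construct one of its two checkerboard (Tait) graphs $G$; both the planar embedding and the black/white face coloring are read off in polynomial time.

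Next, I would search $G$ for pairs of parallel edges, i.e., length-two cycles. Since $G$ has $O(n)$ vertices and edges, all such pairs can be enumerated in polynomial time by sorting incidence lists. Menasco's theorem, combined with the excerpt's remark that for reduced alternating diagrams non-trivial diagrammatic connected summations correspond exactly to length-two cycles of the checkerboard graph, implies that $K$ is prime as a link if and only if $G$ has no such cycle; so if none is found, we certify $K$ as prime.

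If a length-two cycle is present, I would pick one whose bounded planar disk $D$ contains no further length-two cycle of $G$ (an ``innermost'' such cycle). This can be located in polynomial time by computing, for each candidate, the set of vertices it encloses in the planar embedding and testing minimality. Bubbling off $K$ along the two points of $\partial D \cap K$ yields a summand $P_1$ that is (i) alternating and reduced, since alternation is a local property and any nugatory crossing of $P_1$ would also be nugatory in $K$, (ii) non-trivial, because the length-two cycle genuinely cuts off crossings on both sides (using again that $K$ is reduced), and (iii) prime, by the innermostness of our chosen cycle together with Menasco's theorem applied to $P_1$. The residual diagram $K'$ is also reduced alternating, with crossing number strictly less than $n$.

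Iterating this bubbling step at most $n$ times produces the desired decomposition $K = \#_{i=1}^k P_i$ with each $P_i$ non-trivial, reduced, prime, alternating; the total running time is polynomial since each iteration is polynomial and the crossing number strictly decreases. The main (mild) obstacle is checking that the ``innermost'' choice is efficiently computable and that the bubbling off really does preserve the reduced-alternating hypothesis on each side simultaneously, together with careful bookkeeping of which crossings and edges of $G$ belong to which summand; all of this reduces to standard combinatorics of planar checkerboard graphs.
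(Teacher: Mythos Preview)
Your proposal is correct and follows essentially the same approach as the paper: use Menasco's theorem to equate primality with diagrammatic primality, identify non-trivial diagrammatic connected summations with length-two cycles in the checkerboard graph, locate an innermost such cycle in polynomial time, bubble off the resulting prime reduced summand, and iterate by induction on crossing number. The paper's argument is stated more tersely but is otherwise identical in structure and content.
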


We note that a version of the lemma can be stated for alternating \emph{links}, not just knots.  However, since we do not need to deal with links later and since it would require some additional care to define connected sums of links, we have decided to avoid this more general statement.

Menasco furthermore proved in \cite{Menasco:alternating} that reduced, prime, alternating diagrams of knots are either trivial, $(2,p)$ torus knots, or hyperbolic.  By the solution to the Tait flyping conjecture \cite{MenascoThistlethwaite:flyping}, a $(2,p)$ torus knot has a unique minimal crossing alternating diagram.\footnote{We note that all of our diagrams should be considered as diagrams on the 2-sphere $S^2$, not as diagrams in $\mathbb{R}^2$.}  Given any diagram, we can check in constant time if it equals (as a diagram) the trivial unknot diagram, and in polynomial time we can check if it equals (again, as a diagram) the standard $(2,p)$ torus knot diagram.  We combine these results in the following:

\begin{lem}
Any reduced, prime, alternating knot diagram is precisely one of the following:
\begin{enumerate}
\item the trivial unknot diagram,
\item the unique minimal crossing alternating diagram of a $(2,p)$ torus knot (in particular, $p$ is odd and $|p|\ge 3$), or
\item a hyperbolic knot.
\end{enumerate}
\label{l:menasco}
Either of the first two types of diagrams can be recognized in polynomial time, and hence so can the third type.
\end{lem}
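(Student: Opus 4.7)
The trichotomy (1)--(3) is an immediate restatement of Menasco's classification of reduced, prime, alternating knot diagrams \cite{Menasco:alternating} cited in the paragraph preceding the lemma, so no new topological argument is needed there. The lemma therefore reduces to building polynomial-time recognition procedures for classes (1) and (2), since any diagram that passes neither test must, by Menasco, be hyperbolic.

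Class (1) is trivial: a reduced alternating diagram represents the unknot if and only if it has zero crossings, so a constant-time check on the crossing count suffices. For class (2), the key input is the resolution of the Tait flyping conjecture \cite{MenascoThistlethwaite:flyping}, already quoted in the paragraph preceding the lemma, which guarantees that each $(2,p)$ torus knot has a unique minimal-crossing alternating diagram on $S^2$. Since our input $K$ is reduced and alternating, the Tait conjectures (used already in Lemma \ref{l:reduced}) tell us $K$ is minimal-crossing, so $K$ represents a $(2,p)$ torus knot if and only if $K$ is combinatorially isomorphic on $S^2$, as a signed 4-valent plane graph carrying over/under data, to the standard $(2,\pm n)$ diagram, where $n$ is the crossing number of $K$.

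The recognition algorithm then amounts to encoding $K$ as its signed 4-valent plane graph on $S^2$ and checking the rigid combinatorial signature of the standard $(2,p)$ diagram: two distinguished faces of degree $2|p|$ meeting along a single $2|p|$-cycle of vertices, every other face a bigon, and crossings alternating around that principal cycle. Each of these features can be verified in linear time in $n$, yielding the required polynomial-time algorithm. I do not foresee a genuine obstacle in the argument; the only subtlety, rather than a difficulty, is to carry out the comparison on $S^2$ and not in $\mathbb{R}^2$ (per the footnote in the paper), so that mirror reflections and different choices of outer face do not artificially prevent a match. With this convention fixed, every step of the recognition procedure is a transparent combinatorial check.
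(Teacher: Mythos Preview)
Your proposal is correct and follows essentially the same approach as the paper: the paragraph preceding the lemma is already its proof, invoking Menasco for the trichotomy, the flyping conjecture for uniqueness of the $(2,p)$ diagram, and then a direct diagram-equality check in constant (respectively polynomial) time for cases (1) and (2). Your writeup simply fleshes out the polynomial-time check for case (2) by describing the combinatorial signature of the standard $(2,p)$ diagram, which the paper leaves implicit.
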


We conclude this subsection with a brief sketch of an idea for a polynomial time algorithm to solve the isotopy problem for alternating diagrams.  First, we recall yet another theorem of Menasco: if $K$ is an alternating link diagram, then $K$ represents a split link if and only if the diagram is disconnected \cite{Menasco:alternating}.  Now, given $K$ and $L$, we first identify each of their sets of split pieces $K_1,\dots,K_a$, $L_1,\dots,L_b$ (of course if $a\ne b$, then the links are not isotopic).  Lemmas \ref{l:reduced} and \ref{l:prime} together allow us to decompose each of the $K_i$ and $L_j$ into their prime summands in polynomial time.  By uniqueness of prime factorizations, all of the split pieces are determined by their lists of prime summands.  Thus, if we can decide when two prime, reduced, non-split, and alternating link diagrams are isotopic in polynomial time, then we can simply compare these lists to conclude whether or not $K$ and $L$ are isotopic.

Unfortunately, the statement of the flyping conjecture alone is not sufficient to do this is in polynomial time, for the reason that a diagram might have an exponentially large flyping space (note that this would be sufficient to put the isotopy problem in $\NP$).  However, with some additional ideas, we expect that in polynomial time one can compute a normal form for a reduced, alternating, non-split diagram with the property that two diagrams are isotopic if and only if their normal forms are equal.  It is this last step which does not appear in the literature in this precise way, although we conjecture that either the methods of \cite[Sec.~4]{Lackenby:alternating} or \cite{Rankin:alternating2} can be made to work.


\subsection{Plat diagrams and bridge distance}
\label{ss:plat}

For any knot $K$ in $S^3$, a \emph{bridge sphere} $\Sigma$ is an embedded $2$-sphere in $S^3$ such that $K$ intersects both sides of $\Sigma$ in a trivial tangle.  That is, a bridge sphere $\Sigma$ of $K$ cuts $S^3$ into two closed balls $V$ and $W$ such that:
\begin{itemize}
\item $V \cap W = \Sigma$,
\item $K \cap \Sigma$ is $2m$ points for some positive integer $m$,
\item $K \cap V$ can be ambiently isotoped rel $\Sigma$ into $\Sigma$,
\item $K \cap W$ can be ambiently isotoped rel $\Sigma$ into $\Sigma$.
\end{itemize}
Define
\[ \Sigma_K \defeq \Sigma \setminus K, \qquad V_K \defeq V \setminus K, \qquad \text{ and } \ W_K \defeq W \setminus K.\]
The decomposition
\[ S^3 \setminus K = V_K \bigsqcup_{\Sigma_K} W_K\]
is called a \emph{bridge decomposition} of $S^3 \setminus K$, and $m$ is called the \emph{bridge number} of the bridge decomposition.  Bridge decompositions of knots are knot-theoretic analogs of Heegaard splittings of closed, orientable 3-manifolds.

A diagram $K$ in the $xy$-plane is in \emph{bridge position} if, with respect to the $y$ coordinate, all local maxima of the diagram occur above all local minima.  A knot diagram in bridge position encodes an obvious bridge sphere by horizontally slicing through the thickest part.  More precisely, suppose the diagram $K$ lies in the $xy$-plane, considered as a subset of $\mathbb{R}^3$ equipped with the usual coordinates $x,y,z$, and let $S^3$ be $\mathbb{R}^3$ together with a point at infinity $\infty$.  Let $y=c$ be any plane that cuts $K$ so that the minima and maxima of $K$ are on opposite sides.  Then $\Sigma = \{y=c\} \cup \{\infty\}$ is a bridge sphere for $K$.  We call $\Sigma$ an \emph{induced bridge sphere} of the bridge diagram.  The bridge number of $\Sigma$ equals the number of local minima (equivalently, maxima) in the diagram $K$.  If an induced bridge sphere $\Sigma$ of the bridge diagram $K$ has bridge number $m$, then we say $K$ is an \emph{$m$-bridge} diagram.  Every knot-bridge sphere pair can be ambiently isotoped so the bridge sphere is an induced bridge sphere of a bridge diagram.  

Our interest is mainly in bridge diagrams of a specific type.  We say a diagram $K$ is in \emph{$m$-plat position} if:
\begin{itemize}
\item $K$ is in $m$-bridge position,
\item all of the local maxima of $K$ occur above all of the crossings, and
\item all of the local minima of $K$ occur below all of the crossings.
\end{itemize}
The \emph{plats} of the diagram are the arcs where minima or maxima occur.  Every diagram in plat position determines two sets of plats, the \emph{top plats} and the \emph{bottom plats}, each of which is equivalent to a planar matching of $2m$ points.

We shall arrange all of our plat diagram into \emph{rows}, which are tangle subdiagrams of the form
\[
\begin{tikzpicture}[thick,scale=0.8]

   \draw (.8,0) rectangle (2.2,1);
    \draw (1.5,.5) node {$a_{i,1}$};
    \draw (1,0) -- (1,-.5);
    \draw (2,0) -- (2,-.5);
    \draw (1,1) -- (1,1.5);
    \draw (2,1) -- (2,1.5);
    
    \begin{scope}[xshift = 2cm, yshift = 0]
    \draw (.8,0) rectangle (2.2,1);
    \draw (1.5,.5) node {$a_{i,2}$};
    \draw (1,0) -- (1,-.5);
    \draw (2,0) -- (2,-.5);
    \draw (1,1) -- (1,1.5);
    \draw (2,1) -- (2,1.5);
    \end{scope}
    
    \draw (5,.5) node {$\cdots$};
    
    \begin{scope}[xshift = 5cm, yshift = 0]
    \draw (.8,0) rectangle (2.2,1);
    \draw (1.5,.5) node {$a_{i,m-1}$};
    \draw (1,0) -- (1,-.5);
    \draw (2,0) -- (2,-.5);
    \draw (1,1) -- (1,1.5);
    \draw (2,1) -- (2,1.5);
    \end{scope}
    
     \begin{scope}[xshift = 7cm, yshift = 0]
    \draw (.8,0) rectangle (2.2,1);
    \draw (1.5,.5) node {$a_{i,m}$};
    \draw (1,0) -- (1,-.5);
    \draw (2,0) -- (2,-.5);
    \draw (1,1) -- (1,1.5);
    \draw (2,1) -- (2,1.5);
    \end{scope}

\end{tikzpicture}
\]
if $i$ is even, or
\[
\begin{tikzpicture}[thick,scale=0.8]
   
   \draw (0,-.5) -- (0,1.5);
   \draw (8,-.5) -- (8,1.5);
   
   \draw (.8,0) rectangle (2.2,1);
    \draw (1.5,.5) node {$a_{i,1}$};
    \draw (1,0) -- (1,-.5);
    \draw (2,0) -- (2,-.5);
    \draw (1,1) -- (1,1.5);
    \draw (2,1) -- (2,1.5);
    
    \begin{scope}[xshift = 2cm, yshift = 0]
    \draw (.8,0) rectangle (2.2,1);
    \draw (1.5,.5) node {$a_{i,2}$};
    \draw (1,0) -- (1,-.5);
    \draw (2,0) -- (2,-.5);
    \draw (1,1) -- (1,1.5);
    \draw (2,1) -- (2,1.5);
    \end{scope}
    
    \draw (5,.5) node {$\cdots$};
    
    \begin{scope}[xshift = 5cm, yshift = 0]
    \draw (.8,0) rectangle (2.2,1);
    \draw (1.5,.5) node {$a_{i,m-1}$};
    \draw (1,0) -- (1,-.5);
    \draw (2,0) -- (2,-.5);
    \draw (1,1) -- (1,1.5);
    \draw (2,1) -- (2,1.5);
    \end{scope}
\end{tikzpicture}
\]
if $i$ is odd.  Each box is called a \emph{twist region}, and is labeled by an integer $a_{i,j} \in \mathbb{Z}$ called a \emph{twist coefficient} that indicates a number of half-twists, e.g.
\[
\begin{tikzpicture}[thick,scale=0.8]
\begin{scope}[xshift = -14cm]
\draw (0,0) -- (0,-.5);
\draw (0,1) -- (0,1.5);
\draw (1,0) -- (1,-.5);
\draw (1,1) -- (1,1.5);
\draw (-.2,0) rectangle (1.2,1);
\draw (.5,.5) node {$+1$};
\draw (2,.5) node {$=$};
\begin{scope}[xshift=3cm]
\draw (0,0) -- (0,-.5);
\draw (0,1) -- (0,1.5);
\draw (1,0) -- (1,-.5);
\draw (1,1) -- (1,1.5);
\draw (1,0) -- (0,1);
\draw[draw=white,fill=white] (.5,.5) circle [radius=0.15];
\draw (0,0) -- (1,1);
\end{scope}
\end{scope}

\begin{scope}[xshift = -7cm]
\draw (0,0) -- (0,-.5);
\draw (0,1) -- (0,1.5);
\draw (1,0) -- (1,-.5);
\draw (1,1) -- (1,1.5);
\draw (-.2,0) rectangle (1.2,1);
\draw (.5,.5) node {$0$};
\draw (2,.5) node {$=$};
\begin{scope}[xshift=3cm]
\draw (0,-.5) -- (0,1.5);
\draw (1,-.5) -- (1,1.5);
\end{scope}
\end{scope}

\draw (0,0) -- (0,-.5);
\draw (0,1) -- (0,1.5);
\draw (1,0) -- (1,-.5);
\draw (1,1) -- (1,1.5);
\draw (-.2,0) rectangle (1.2,1);
\draw (.5,.5) node {$-2$};
\draw (2,.5) node {$=$};
\begin{scope}[xshift=3cm]
\draw (0,0) -- (0,-.5);
\draw (0,1) -- (0,1.5);
\draw (1,0) -- (1,-.5);
\draw (1,1) -- (1,1.5);
\draw (0,0) -- (1,.5);
\draw[draw=white,fill=white] (.5,.25) circle [radius=0.15];
\draw (1,0) -- (0,.5);
\begin{scope}[yshift=0.5cm];
\draw (0,0) -- (1,.5);
\draw[draw=white,fill=white] (.5,.25) circle [radius=0.15];
\draw (1,0) -- (0,.5);
\end{scope}
\end{scope}
\end{tikzpicture}
\]
For a typical $m$-plat diagram, many of the twist coefficients $a_{i,j}$ may equal 0.  An $m$-plat diagram with an even number of rows is \emph{standard} if the top plats form this planar matching:
\[
\begin{tikzpicture}[thick,scale=0.8]
\draw (0,0) arc [x radius=4, y radius=2, start angle = 180, end angle = 0];
\draw (1,0) arc [radius=1, start angle = 180, end angle=0];
\draw (4,0.5) node {$\dots$};
\draw (5,0) arc [radius=1, start angle = 180, end angle=0];
\end{tikzpicture}
\]
and the bottom plats form this one:
\[
\begin{tikzpicture}[thick,scale=0.8]
\draw (0,0)--(0,-1) arc [radius=.5, start angle = 180, end angle=360] -- (1,0);
\draw (3,0)--(3,-1) arc [radius=.5, start angle = 180, end angle=360] -- (4,0);
\draw (5.5,-.5) node {$\dots$};
\draw (7,0)--(7,-1) arc [radius=.5, start angle = 180, end angle=360] -- (8,0);
\end{tikzpicture}
\]
When an $m$-plat diagram has an odd number of rows, it is \emph{standard} if the bottom plats are as above, but the top plats form this planar matching:
\[
\begin{tikzpicture}[thick,scale=0.8]
\begin{scope}[yscale=-1]
\draw (0,0)--(0,-1) arc [radius=.5, start angle = 180, end angle=360] -- (1,0);
\draw (3,0)--(3,-1) arc [radius=.5, start angle = 180, end angle=360] -- (4,0);
\draw (5.5,-.5) node {$\dots$};
\draw (7,0)--(7,-1) arc [radius=.5, start angle = 180, end angle=360] -- (8,0);
\end{scope}
\end{tikzpicture}
\]
See Figures \ref{f:evenplat} and \ref{f:oddplat}.  Finally, we say an $m$-plat diagram is \emph{highly twisted} if it is standard and $|a_{i,j}| \ge 3$ for all $i$ and $j$.

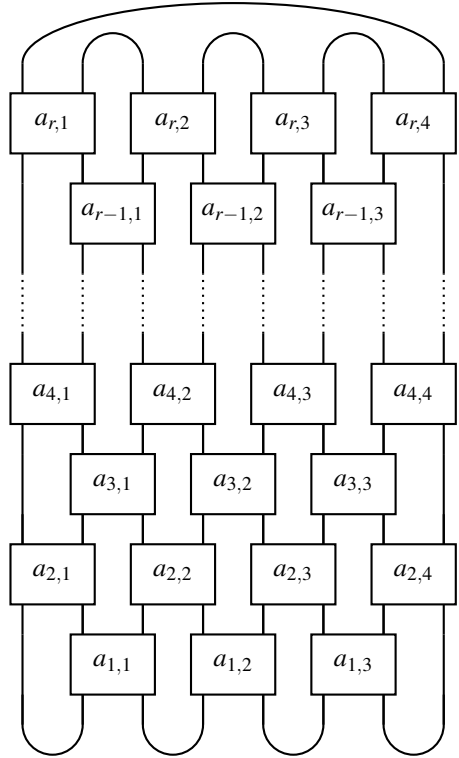
\begin{figure}
\begin{tikzpicture}[thick,scale=0.8]
\draw (0,0) -- (0,-.5) arc [radius = 0.5, start angle = 180, end angle = 360] -- (1,0);
\draw (2,0) -- (2,-.5) arc [radius = 0.5, start angle = 180, end angle = 360] -- (3,0);
\draw (4,0) -- (4,-.5) arc [radius = 0.5, start angle = 180, end angle = 360] -- (5,0);
\draw (6,0) -- (6,-.5) arc [radius = 0.5, start angle = 180, end angle = 360] -- (7,0);

\draw (.8,0) rectangle (2.2,1);
\draw (1.5,.5) node {$a_{1,1}$};
\draw (1,0) -- (1,-.5);
\draw (2,0) -- (2,-.5);
\draw (1,1) -- (1,1.5);
\draw (2,1) -- (2,1.5);

\begin{scope}[xshift = 2cm, yshift = 0]
\draw (.8,0) rectangle (2.2,1);
\draw (1.5,.5) node {$a_{1,2}$};
\draw (1,0) -- (1,-.5);
\draw (2,0) -- (2,-.5);
\draw (1,1) -- (1,1.5);
\draw (2,1) -- (2,1.5);
\end{scope}

\begin{scope}[xshift = 4cm, yshift = 0]
\draw (.8,0) rectangle (2.2,1);
\draw (1.5,.5) node {$a_{1,3}$};
\draw (1,0) -- (1,-.5);
\draw (2,0) -- (2,-.5);
\draw (1,1) -- (1,1.5);
\draw (2,1) -- (2,1.5);
\end{scope}

\begin{scope}[xshift = -1cm, yshift = 1.5cm]
\draw (.8,0) rectangle (2.2,1);
\draw (1.5,.5) node {$a_{2,1}$};
\draw (1,0) -- (1,-.5);
\draw (2,0) -- (2,-.5);
\draw (1,1) -- (1,1.5);
\draw (2,1) -- (2,1.5);
\end{scope}

\begin{scope}[xshift = 1cm, yshift = 1.5cm]
\draw (.8,0) rectangle (2.2,1);
\draw (1.5,.5) node {$a_{2,2}$};
\draw (1,0) -- (1,-.5);
\draw (2,0) -- (2,-.5);
\draw (1,1) -- (1,1.5);
\draw (2,1) -- (2,1.5);
\end{scope}

\begin{scope}[xshift = 3cm, yshift = 1.5cm]
\draw (.8,0) rectangle (2.2,1);
\draw (1.5,.5) node {$a_{2,3}$};
\draw (1,0) -- (1,-.5);
\draw (2,0) -- (2,-.5);
\draw (1,1) -- (1,1.5);
\draw (2,1) -- (2,1.5);
\end{scope}

\begin{scope}[xshift = 5cm, yshift = 1.5cm]
\draw (.8,0) rectangle (2.2,1);
\draw (1.5,.5) node {$a_{2,4}$};
\draw (1,0) -- (1,-.5);
\draw (2,0) -- (2,-.5);
\draw (1,1) -- (1,1.5);
\draw (2,1) -- (2,1.5);
\end{scope}

\draw (0,-.5) -- (0,1);
\draw (7,-.5) -- (7,1);

\begin{scope}[yshift = 3cm]

    \draw (.8,0) rectangle (2.2,1);
    \draw (1.5,.5) node {$a_{3,1}$};
    \draw (1,0) -- (1,-.5);
    \draw (2,0) -- (2,-.5);
    \draw (1,1) -- (1,1.5);
    \draw (2,1) -- (2,1.5);
    
    \begin{scope}[xshift = 2cm, yshift = 0]
    \draw (.8,0) rectangle (2.2,1);
    \draw (1.5,.5) node {$a_{3,2}$};
    \draw (1,0) -- (1,-.5);
    \draw (2,0) -- (2,-.5);
    \draw (1,1) -- (1,1.5);
    \draw (2,1) -- (2,1.5);
    \end{scope}
    
    \begin{scope}[xshift = 4cm, yshift = 0]
    \draw (.8,0) rectangle (2.2,1);
    \draw (1.5,.5) node {$a_{3,3}$};
    \draw (1,0) -- (1,-.5);
    \draw (2,0) -- (2,-.5);
    \draw (1,1) -- (1,1.5);
    \draw (2,1) -- (2,1.5);
    \end{scope}
    
    \begin{scope}[xshift = -1cm, yshift = 1.5cm]
    \draw (.8,0) rectangle (2.2,1);
    \draw (1.5,.5) node {$a_{4,1}$};
    \draw (1,0) -- (1,-.5);
    \draw (2,0) -- (2,-.5);
    \draw (1,1) -- (1,1.5);
    \draw (2,1) -- (2,1.5);
    \end{scope}
    
    \begin{scope}[xshift = 1cm, yshift = 1.5cm]
    \draw (.8,0) rectangle (2.2,1);
    \draw (1.5,.5) node {$a_{4,2}$};
    \draw (1,0) -- (1,-.5);
    \draw (2,0) -- (2,-.5);
    \draw (1,1) -- (1,1.5);
    \draw (2,1) -- (2,1.5);
    \end{scope}
    
    \begin{scope}[xshift = 3cm, yshift = 1.5cm]
    \draw (.8,0) rectangle (2.2,1);
    \draw (1.5,.5) node {$a_{4,3}$};
    \draw (1,0) -- (1,-.5);
    \draw (2,0) -- (2,-.5);
    \draw (1,1) -- (1,1.5);
    \draw (2,1) -- (2,1.5);
    \end{scope}
    
    \begin{scope}[xshift = 5cm, yshift = 1.5cm]
    \draw (.8,0) rectangle (2.2,1);
    \draw (1.5,.5) node {$a_{4,4}$};
    \draw (1,0) -- (1,-.5);
    \draw (2,0) -- (2,-.5);
    \draw (1,1) -- (1,1.5);
    \draw (2,1) -- (2,1.5);
    \end{scope}

    \draw[dotted] (0,3) -- (0,4);
    \draw[dotted] (1,3) -- (1,4);
    \draw[dotted] (2,3) -- (2,4);
    \draw[dotted] (3,3) -- (3,4);
    \draw[dotted] (4,3) -- (4,4);
    \draw[dotted] (5,3) -- (5,4);
    \draw[dotted] (6,3) -- (6,4);
    \draw[dotted] (7,3) -- (7,4);
    
    \draw (0,-.5) -- (0,1);
    \draw (7,-.5) -- (7,1);

\end{scope}

\begin{scope}[yshift = 7.5cm]

    \draw (.8,0) rectangle (2.2,1);
    \draw (1.5,.5) node {$a_{r-1,1}$};
    \draw (1,0) -- (1,-.5);
    \draw (2,0) -- (2,-.5);
    \draw (1,1) -- (1,1.5);
    \draw (2,1) -- (2,1.5);
    
    \begin{scope}[xshift = 2cm, yshift = 0]
    \draw (.8,0) rectangle (2.2,1);
    \draw (1.5,.5) node {$a_{r-1,2}$};
    \draw (1,0) -- (1,-.5);
    \draw (2,0) -- (2,-.5);
    \draw (1,1) -- (1,1.5);
    \draw (2,1) -- (2,1.5);
    \end{scope}
    
    \begin{scope}[xshift = 4cm, yshift = 0]
    \draw (.8,0) rectangle (2.2,1);
    \draw (1.5,.5) node {$a_{r-1,3}$};
    \draw (1,0) -- (1,-.5);
    \draw (2,0) -- (2,-.5);
    \draw (1,1) -- (1,1.5);
    \draw (2,1) -- (2,1.5);
    \end{scope}
    
    \begin{scope}[xshift = -1cm, yshift = 1.5cm]
    \draw (.8,0) rectangle (2.2,1);
    \draw (1.5,.5) node {$a_{r,1}$};
    \draw (1,0) -- (1,-.5);
    \draw (2,0) -- (2,-.5);
    \draw (1,1) -- (1,1.5);
    \draw (2,1) -- (2,1.5);
    \end{scope}
    
    \begin{scope}[xshift = 1cm, yshift = 1.5cm]
    \draw (.8,0) rectangle (2.2,1);
    \draw (1.5,.5) node {$a_{r,2}$};
    \draw (1,0) -- (1,-.5);
    \draw (2,0) -- (2,-.5);
    \draw (1,1) -- (1,1.5);
    \draw (2,1) -- (2,1.5);
    \end{scope}
    
    \begin{scope}[xshift = 3cm, yshift = 1.5cm]
    \draw (.8,0) rectangle (2.2,1);
    \draw (1.5,.5) node {$a_{r,3}$};
    \draw (1,0) -- (1,-.5);
    \draw (2,0) -- (2,-.5);
    \draw (1,1) -- (1,1.5);
    \draw (2,1) -- (2,1.5);
    \end{scope}
    
    \begin{scope}[xshift = 5cm, yshift = 1.5cm]
    \draw (.8,0) rectangle (2.2,1);
    \draw (1.5,.5) node {$a_{r,4}$};
    \draw (1,0) -- (1,-.5);
    \draw (2,0) -- (2,-.5);
    \draw (1,1) -- (1,1.5);
    \draw (2,1) -- (2,1.5);
    \end{scope}

    \draw (0,-.5) -- (0,1);
    \draw (7,-.5) -- (7,1);

\end{scope}

\draw (0,10.5) arc [x radius = 3.5cm, y radius = 1cm, start angle = 180, end angle = 0];
\draw (1,10.5) arc [radius = .5cm,start angle = 180, end angle = 0];
\draw (3,10.5) arc [radius = .5cm,start angle = 180, end angle = 0];
\draw (5,10.5) arc [radius = .5cm,start angle = 180, end angle = 0];
\end{tikzpicture}
\caption{A standard $4$-plat diagram with an even number of rows.}
\label{f:evenplat}
\end{figure}

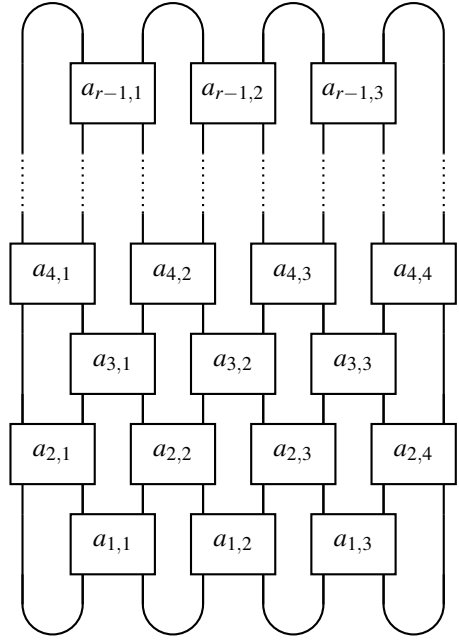
\begin{figure}
\begin{tikzpicture}[thick,scale=0.8]
\draw (0,0) -- (0,-.5) arc [radius = 0.5, start angle = 180, end angle = 360] -- (1,0);
\draw (2,0) -- (2,-.5) arc [radius = 0.5, start angle = 180, end angle = 360] -- (3,0);
\draw (4,0) -- (4,-.5) arc [radius = 0.5, start angle = 180, end angle = 360] -- (5,0);
\draw (6,0) -- (6,-.5) arc [radius = 0.5, start angle = 180, end angle = 360] -- (7,0);

\draw (.8,0) rectangle (2.2,1);
\draw (1.5,.5) node {$a_{1,1}$};
\draw (1,0) -- (1,-.5);
\draw (2,0) -- (2,-.5);
\draw (1,1) -- (1,1.5);
\draw (2,1) -- (2,1.5);

\begin{scope}[xshift = 2cm, yshift = 0]
\draw (.8,0) rectangle (2.2,1);
\draw (1.5,.5) node {$a_{1,2}$};
\draw (1,0) -- (1,-.5);
\draw (2,0) -- (2,-.5);
\draw (1,1) -- (1,1.5);
\draw (2,1) -- (2,1.5);
\end{scope}

\begin{scope}[xshift = 4cm, yshift = 0]
\draw (.8,0) rectangle (2.2,1);
\draw (1.5,.5) node {$a_{1,3}$};
\draw (1,0) -- (1,-.5);
\draw (2,0) -- (2,-.5);
\draw (1,1) -- (1,1.5);
\draw (2,1) -- (2,1.5);
\end{scope}

\begin{scope}[xshift = -1cm, yshift = 1.5cm]
\draw (.8,0) rectangle (2.2,1);
\draw (1.5,.5) node {$a_{2,1}$};
\draw (1,0) -- (1,-.5);
\draw (2,0) -- (2,-.5);
\draw (1,1) -- (1,1.5);
\draw (2,1) -- (2,1.5);
\end{scope}

\begin{scope}[xshift = 1cm, yshift = 1.5cm]
\draw (.8,0) rectangle (2.2,1);
\draw (1.5,.5) node {$a_{2,2}$};
\draw (1,0) -- (1,-.5);
\draw (2,0) -- (2,-.5);
\draw (1,1) -- (1,1.5);
\draw (2,1) -- (2,1.5);
\end{scope}

\begin{scope}[xshift = 3cm, yshift = 1.5cm]
\draw (.8,0) rectangle (2.2,1);
\draw (1.5,.5) node {$a_{2,3}$};
\draw (1,0) -- (1,-.5);
\draw (2,0) -- (2,-.5);
\draw (1,1) -- (1,1.5);
\draw (2,1) -- (2,1.5);
\end{scope}

\begin{scope}[xshift = 5cm, yshift = 1.5cm]
\draw (.8,0) rectangle (2.2,1);
\draw (1.5,.5) node {$a_{2,4}$};
\draw (1,0) -- (1,-.5);
\draw (2,0) -- (2,-.5);
\draw (1,1) -- (1,1.5);
\draw (2,1) -- (2,1.5);
\end{scope}

\draw (0,-.5) -- (0,1);
\draw (7,-.5) -- (7,1);

\begin{scope}[yshift = 3cm]

    \draw (.8,0) rectangle (2.2,1);
    \draw (1.5,.5) node {$a_{3,1}$};
    \draw (1,0) -- (1,-.5);
    \draw (2,0) -- (2,-.5);
    \draw (1,1) -- (1,1.5);
    \draw (2,1) -- (2,1.5);
    
    \begin{scope}[xshift = 2cm, yshift = 0]
    \draw (.8,0) rectangle (2.2,1);
    \draw (1.5,.5) node {$a_{3,2}$};
    \draw (1,0) -- (1,-.5);
    \draw (2,0) -- (2,-.5);
    \draw (1,1) -- (1,1.5);
    \draw (2,1) -- (2,1.5);
    \end{scope}
    
    \begin{scope}[xshift = 4cm, yshift = 0]
    \draw (.8,0) rectangle (2.2,1);
    \draw (1.5,.5) node {$a_{3,3}$};
    \draw (1,0) -- (1,-.5);
    \draw (2,0) -- (2,-.5);
    \draw (1,1) -- (1,1.5);
    \draw (2,1) -- (2,1.5);
    \end{scope}
    
    \begin{scope}[xshift = -1cm, yshift = 1.5cm]
    \draw (.8,0) rectangle (2.2,1);
    \draw (1.5,.5) node {$a_{4,1}$};
    \draw (1,0) -- (1,-.5);
    \draw (2,0) -- (2,-.5);
    \draw (1,1) -- (1,1.5);
    \draw (2,1) -- (2,1.5);
    \end{scope}
    
    \begin{scope}[xshift = 1cm, yshift = 1.5cm]
    \draw (.8,0) rectangle (2.2,1);
    \draw (1.5,.5) node {$a_{4,2}$};
    \draw (1,0) -- (1,-.5);
    \draw (2,0) -- (2,-.5);
    \draw (1,1) -- (1,1.5);
    \draw (2,1) -- (2,1.5);
    \end{scope}
    
    \begin{scope}[xshift = 3cm, yshift = 1.5cm]
    \draw (.8,0) rectangle (2.2,1);
    \draw (1.5,.5) node {$a_{4,3}$};
    \draw (1,0) -- (1,-.5);
    \draw (2,0) -- (2,-.5);
    \draw (1,1) -- (1,1.5);
    \draw (2,1) -- (2,1.5);
    \end{scope}
    
    \begin{scope}[xshift = 5cm, yshift = 1.5cm]
    \draw (.8,0) rectangle (2.2,1);
    \draw (1.5,.5) node {$a_{4,4}$};
    \draw (1,0) -- (1,-.5);
    \draw (2,0) -- (2,-.5);
    \draw (1,1) -- (1,1.5);
    \draw (2,1) -- (2,1.5);
    \end{scope}

    \draw[dotted] (0,3) -- (0,4);
    \draw[dotted] (1,3) -- (1,4);
    \draw[dotted] (2,3) -- (2,4);
    \draw[dotted] (3,3) -- (3,4);
    \draw[dotted] (4,3) -- (4,4);
    \draw[dotted] (5,3) -- (5,4);
    \draw[dotted] (6,3) -- (6,4);
    \draw[dotted] (7,3) -- (7,4);
    
    \draw (0,-.5) -- (0,1);
    \draw (7,-.5) -- (7,1);

\end{scope}

\begin{scope}[yshift = 7.5cm]

    \draw (.8,0) rectangle (2.2,1);
    \draw (1.5,.5) node {$a_{r-1,1}$};
    \draw (1,0) -- (1,-.5);
    \draw (2,0) -- (2,-.5);
    \draw (1,1) -- (1,1.5);
    \draw (2,1) -- (2,1.5);
    
    \begin{scope}[xshift = 2cm, yshift = 0]
    \draw (.8,0) rectangle (2.2,1);
    \draw (1.5,.5) node {$a_{r-1,2}$};
    \draw (1,0) -- (1,-.5);
    \draw (2,0) -- (2,-.5);
    \draw (1,1) -- (1,1.5);
    \draw (2,1) -- (2,1.5);
    \end{scope}
    
    \begin{scope}[xshift = 4cm, yshift = 0]
    \draw (.8,0) rectangle (2.2,1);
    \draw (1.5,.5) node {$a_{r-1,3}$};
    \draw (1,0) -- (1,-.5);
    \draw (2,0) -- (2,-.5);
    \draw (1,1) -- (1,1.5);
    \draw (2,1) -- (2,1.5);
    \end{scope}

    \draw (0,-.5) -- (0,1);
    \draw (7,-.5) -- (7,1);

\end{scope}

\begin{scope}[yshift=8.5cm,yscale=-1]
\draw (0,0) -- (0,-.5) arc [radius = 0.5, start angle = 180, end angle = 360] -- (1,0);
\draw (2,0) -- (2,-.5) arc [radius = 0.5, start angle = 180, end angle = 360] -- (3,0);
\draw (4,0) -- (4,-.5) arc [radius = 0.5, start angle = 180, end angle = 360] -- (5,0);
\draw (6,0) -- (6,-.5) arc [radius = 0.5, start angle = 180, end angle = 360] -- (7,0);
\end{scope}
\end{tikzpicture}
\caption{A standard $4$-plat diagram with an odd number of rows.}
\label{f:oddplat}
\end{figure}

Suppose $\Sigma$ is a bridge sphere for a knot $K$.  The \emph{curve graph} $\cC(\Sigma_K)$ of $\Sigma_K$ is the infinite simplicial graph defined as follows:
\begin{itemize}
\item the vertices of $\cC(\Sigma_K)$ are the isotopy classes $[\gamma]$ of simple closed curves $\gamma \subset \Sigma_K$ that are non-trivial and non-peripheral (meaning $\gamma$ bounds neither a disk nor a once-punctured disk in $\Sigma_K$), and
\item two vertices $v_1 \ne v_2$ of $\cC(\Sigma_K)$ are connected by an edge if and only if $v_1 = [\gamma_1]$ and $v_2 = [\gamma_2]$ where $\gamma_1 \cap \gamma_2 = \emptyset$.
\end{itemize}
If $v_1$ and $v_2$ are two vertices of $\cC(\Sigma_K)$, we let $d(v_1,v_2)$ denote their distance from one another with respect to the combinatorial path metric on $\cC(\Sigma_K)$.  More generally, if $A$ and $B$ are two subsets of the vertex set of $\cC(\Sigma_K)$, we define
\[ d(A,B) \defeq \min \{ d(a,b) \mid a \in A, b \in B \}. \]
Computing or bounding the distance between various subsets of curve graphs is a basic and important problem in combinatorial topology, as the following exemplifies.

Let $[\gamma]$ be a vertex of $\cC(\Sigma_K)$.  We say $[\gamma]$ \emph{bounds a disk} in $V_K$ if there exists a properly embedded disk $D \subset V_K$ with $\partial D = \gamma$.  Let $\mathcal{D}(V_K)$ denote the \emph{disk set} of $V_K$ consisting of all vertices in $\cC(\Sigma_K)$ that bound a disk in $V_K$.  Likewise, $\mathcal{D}(W_K)$ denotes the disk set of $W_K$.  The \emph{distance} of the bridge decomposition of $K$ by $\Sigma$ is
\[ d(K,\Sigma) \defeq d(\mathcal{D}(V_K),\mathcal{D}(W_K)). \]
We refer the reader to \cite{BachmanSchleimer:bridge,Tomova:bridge,JohnsonMoriah:plat} for much more information about bridge distance.\footnote{Note that the first reference uses a slightly different (but equivalent) definition of distance.  Our definition follows \cite{Tomova:bridge} and \cite{JohnsonMoriah:plat}.}  We summarize everything we need with the following mega-proposition.

\begin{prop}[After Johnson and Moriah {\cite[Thm.~1.2]{JohnsonMoriah:plat}}]
If $K$ is a highly twisted standard $m$-plat link diagram with $n$ rows where $m \ge 3$ and $n>4m(m-2)$ then the following hold:
\begin{enumerate}
\item the distance of the induced bridge sphere $\Sigma$ is $d(K,\Sigma) = \lceil n/(2(m-2)) \rceil$,
\item the bridge number $m$ is minimal (over all diagrams of equivalent links) and the induced bridge sphere of $K$ is the unique minimal bridge sphere (up to isotopy), and
\item $K$ is a hyperbolic link.
\end{enumerate}
\label{p:twisted}
\end{prop}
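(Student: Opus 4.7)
The plan is to deduce all three conclusions from Johnson and Moriah's Theorem~1.2, combined with standard principles relating bridge distance to topology. The numerical threshold $n > 4m(m-2)$ is arranged precisely so that a single distance bound suffices to invoke each principle.

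For part~(1), the distance equality $d(K,\Sigma) = \lceil n/(2(m-2)) \rceil$ is the substance of Johnson--Moriah's theorem applied in our exact setup, and I would simply quote it. Their argument has two halves. The upper bound follows by explicitly exhibiting a path of length $\lceil n/(2(m-2)) \rceil$ in the curve graph $\cC(\Sigma_K)$: starting from a disk curve in $V_K$, one advances by roughly $2(m-2)$ rows at a time using curves coming from horizontal bands across the plat, eventually reaching a disk curve in $W_K$. The lower bound is the more delicate half and proceeds by a subsurface projection argument: because $|a_{i,j}| \ge 3$ in every twist region, any pair of disks on opposite sides of $\Sigma$ must have large relative twisting across each row, and accumulating these contributions across all $n$ rows forces the curve complex distance to be at least the claimed quantity.

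For part~(2), I would invoke the standard principle (due essentially to Tomova, refining earlier work of Bachman--Schleimer) that if $d(K,\Sigma) > 2m$ then $m$ is the minimal bridge number of $K$ and $\Sigma$ is the unique minimal bridge sphere up to isotopy. The hypothesis $n > 4m(m-2)$ forces
\[ d(K,\Sigma) = \lceil n/(2(m-2)) \rceil > 2m, \]
so this principle applies. For part~(3), a similar high-distance principle shows that any essential sphere, torus, or annulus in $S^3 \setminus K$ would intersect $\Sigma_K$ in curves whose projections to the two disk sets are joined by a short path, contradicting the distance bound. Hence $S^3 \setminus K$ is irreducible, atoroidal, and acylindrical, and since it is Haken (its boundary is a nonempty collection of tori once we thicken $K$), Thurston's hyperbolization theorem supplies a complete hyperbolic structure.

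The main obstacle is Johnson and Moriah's distance computation in part~(1), which is the technical heart of the argument; parts~(2) and~(3) are formal consequences once part~(1) is in hand, requiring only the single numerical check that $\lceil n/(2(m-2)) \rceil$ exceeds $2m$ whenever $n > 4m(m-2)$.
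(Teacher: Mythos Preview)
Your proposal is correct and matches the paper's own proof essentially line for line: part~(1) is quoted directly from Johnson--Moriah, part~(2) is deduced from Tomova's theorem via the inequality $d(K,\Sigma) > 2m$, and part~(3) follows from the Bachman--Schleimer distance-vs-essential-surface principle together with Thurston's hyperbolization. The only difference is that you include a sketch of how Johnson--Moriah prove their distance formula, which the paper omits.
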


\begin{proof}
\begin{enumerate}
\item This is the conclusion of \cite[Thm.~1.2]{JohnsonMoriah:plat}, which holds without the assumption $n > 4m(m-2)$).
\item The assumptions on $m$ and $n$ imply $d(K,\Sigma)>2m$, so Tomova's work \cite{Tomova:bridge} shows $\Sigma$ is the unique minimal bridge sphere of $K$.
\item Since $d(\Sigma)>2$, $K$ is prime, atoroidal, and an-annular.  Thus $K$ is hyperbolic by Thurston's hyperbolization theorem. (The same argument is found in the proof of \cite[Cor.~6.2]{BachmanSchleimer:bridge}, for example.)
\end{enumerate}
\vspace{-\baselineskip}
\end{proof}

In fact, if $K$ is a highly twisted knot diagram satisfying the conditions of Proposition \ref{p:twisted} that also happens to be alternating, then Agol and Thurston's refinement of a theorem of Lackenby \cite{Lackenby:alternating} shows the hyperbolic volume of $K$ satisfies
\[ v_3(t(K) - 2) \le \mathrm{Vol}(S^3 \setminus K) \le 10 v_3(t(K) - 1) \]
where $v_3 \approx 1.01494$ is the volume of a regular hyperbolic ideal 3-simple and
\[ t(K) = \lfloor(2m-1)(n/2)\rfloor\]
is the number of twist regions of $K$.  We do not use or need this result.  Rather, we include it simply to provide the reader some sense of how the hyperbolic geometry of such a knot depends on its diagram.  The main behavior to note is that so long as the diagram is highly twisted, the precise twist coefficients don't affect the volume very much, as the volume is coarsely proportional to the number of twist regions.  In other words, $\mathrm{Vol}(S^3 \setminus K) = \Theta(t(K))$.

\section{Proofs}
\label{s:proof}
The proofs of Theorems \ref{thm:reduction1} and \ref{thm:reduction2} use the same basic idea.  Given a knot diagram $K$, we pad it by inserting additional crossings in the diagram.  The result is another diagram $K'$ of a different knot, but if we pad with an appropriate number of crossings in a row (see Vafa's theorem, Lemma \ref{l:vafa} below), then we can guarantee $Z(K) = Z(K')$.  If we perform this kind of padding in enough places, and also use some other tricks, then we can apply the results of Section \ref{s:diagrams} to build $K'$ and $K''$ with the desired properties in the conclusions of the two theorems.  

\subsection{Braiding exponents in modular fusion categories}
\label{ss:vafa}
Given a pair of objects $V,W$ in a modular fusion category $\cC$, define the \emph{braiding exponent} $e(V,W)$ to be the order of the ``squared braiding" morphism $br_{V,W}^2 \defeq br_{W,V} \circ br_{V,W}$.  The \emph{self-braiding exponent} of a single object $V$ is $e(V,V)$.  Vafa's theorem says $e(V,W)$ is a well-defined integer.

\begin{lem}[Vafa's theorem \cite{Vafa:conformal}, see also \cite{AndersonMoore:rationality,Etingof:Vafa}]
In any $(2+1)$-dimensional Witten-Reshetikin-Turaev TQFT determined by a modular fusion category, the square of the braiding of any two objects is a \emph{finite order} linear map.
\label{l:vafa}
\end{lem}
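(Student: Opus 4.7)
The plan is to reduce the statement to the classical fact that the twist $\theta_i$ of each simple object in a modular fusion category $\cC$ is a root of unity, then take a least common multiple. First, using semisimplicity of $\cC$, I would decompose $V = \bigoplus_i V_i^{n_i}$ and $W = \bigoplus_j V_j^{m_j}$ into direct sums of simple objects with multiplicities. Naturality of the braiding expresses $br_{V,W}^2$ as a direct sum of operators $br_{V_i,V_j}^2$ (with multiplicities) on the corresponding summands $V_i \otimes V_j$, so it suffices to handle the case of simple $V_i, V_j$.

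Next, I would further decompose $V_i \otimes V_j = \bigoplus_k V_k^{N_{ij}^k}$ using the fusion rules of $\cC$. Because $br_{V_j,V_i} \circ br_{V_i,V_j}$ lies in the center of $\mathrm{End}_{\cC}(V_i \otimes V_j)$, it acts by a single scalar on each $V_k$-isotypic component. I would identify that scalar using the defining ribbon identity
\[
\theta_{V_i \otimes V_j} = (\theta_{V_i} \otimes \theta_{V_j}) \circ br_{V_j,V_i} \circ br_{V_i,V_j},
\]
whose left-hand side acts as multiplication by $\theta_k$ on the $V_k$-component, while $\theta_{V_i} \otimes \theta_{V_j}$ acts by $\theta_i \theta_j$. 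Hence the squared braiding acts as the scalar $\theta_k/(\theta_i \theta_j)$, and the order of $br_{V_i,V_j}^2$ divides the least common multiple of the orders of the finitely many such scalars indexed by the simple summands $V_k$ appearing in $V_i \otimes V_j$.

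The main obstacle is the remaining substantive step, which is the true content of Vafa's theorem: each twist $\theta_i$ of a simple object must be a root of unity. A natural modern route exploits the projective $SL(2,\mathbb{Z})$ representation that $\cC$ carries, realized by the modular $S$- and $T$-matrices acting on the finite-dimensional space spanned by isomorphism classes of simple objects; the congruence subgroup theorem for modular tensor categories then implies this projective representation factors through $SL(2,\mathbb{Z}/N\mathbb{Z})$ for some $N$, so the diagonal matrix $T$, whose entries are the twists up to a global root-of-unity factor coming from the central charge, must have finite order. A more self-contained argument, following Etingof \cite{Etingof:Vafa}, sidesteps the modular representation entirely and deduces finiteness of the order of each $\theta_i$ directly from the pentagon and hexagon axioms together with finiteness of the fusion ring; either route completes the reduction and yields the lemma.
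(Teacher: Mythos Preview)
Your argument is correct. The reduction via semisimplicity and naturality of the braiding to the case of simple $V_i,V_j$, the further isotypic decomposition of $V_i\otimes V_j$, and the identification of the eigenvalue of $br_{V_i,V_j}^2$ on the $V_k$-component as $\theta_k/(\theta_i\theta_j)$ via the ribbon balancing identity are all standard and valid. You also correctly isolate the genuine content as the statement that each $\theta_i$ is a root of unity, and the two routes you name (the congruence subgroup theorem for the projective $SL(2,\mathbb{Z})$-representation, or Etingof's direct argument from the fusion ring and hexagon/pentagon axioms) are both legitimate.

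However, there is nothing to compare against: the paper does not prove this lemma. It is stated as a named result with citations to \cite{Vafa:conformal}, \cite{AndersonMoore:rationality}, and \cite{Etingof:Vafa}, and the paper uses it as a black box (only the \emph{conclusion} that $e(V,V)$ is finite is needed, in order to justify the ``Vafa padding'' of $2T$ crossings). So your proposal is not an alternative route but rather a proof where the paper offers none. One minor comment: invoking the congruence subgroup theorem is considerable overkill for this purpose, since that result is substantially deeper than Vafa's theorem itself; the Etingof-style argument you mention second is closer in spirit to what the cited references actually do.
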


\subsection{Proof of Theorem \ref{thm:reduction2}}
\label{ss:reduction2}
Theorem \ref{thm:reduction2} follows immediately by combining the following result with Proposition \ref{p:twisted}.

\begin{prop}
Fix a modular fusion category $\cC$ and an object $V$ with a scalar twist $\theta$.  Let $Z$ be the Witten-Reshetikhin-Turaev invariant of oriented ribbon knots colored by $V$.  Then there exists a classical polynomial time algorithm that converts an oriented ribbon knot diagram $K$ into an alternating and highly twisted standard $m$-plat ribbon diagram $K''$ with $Z(K'') = Z(K)$ and $n=2k$ rows, where $m \ge 3$ and $n > 4m(m-2)$.
\label{p:reduction}
\end{prop}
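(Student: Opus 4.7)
The plan is to first realize $K$ as a standard $m$-plat diagram $K_0$ of the same ribbon knot, and then use Vafa's theorem to inflate each twist region of $K_0$ by multiples of $2e(V,V)$, choosing the signs and magnitudes of the resulting twist coefficients so that the output $K''$ is simultaneously standard, alternating, highly twisted, has $m \ge 3$, and has $n = 2k > 4m(m-2)$ rows. This reduces the proposition to (i) a combinatorial preparation step and (ii) a single application of Lemma \ref{l:vafa}.

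First I would convert $K$ into plat form. Applying Alexander's theorem produces in polynomial time a braid word $\beta$ whose plat closure realizes the same underlying knot as $K$; care must be taken here to preserve the blackboard framing so that $Z(K_0) = Z(K)$, which can be arranged by choosing a framed braid representative appropriately (any residual writhe discrepancy will be absorbed into the Vafa-padded shifts in the final step). If the resulting bridge number $m_0$ is less than $3$, repeated plat stabilizations produce a plat of bridge number exactly $3$ without changing the underlying ribbon knot. If the number of rows is too small or of the wrong parity, I would insert additional ``empty'' rows of twist regions (with $a_{i,j} = 0$) at the top or bottom of the plat until $n = 2k > 4m(m-2)$. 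All of these operations are polynomial in the crossing number of $K$ and can be arranged to leave the top and bottom plats in the standard nested-arcs matching.

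Now comes the key step. By Lemma \ref{l:vafa}, $br_{V,V}^{2e} = \mathrm{id}$ where $e = e(V,V)$, so replacing any twist coefficient $a_{i,j}$ by $a_{i,j} + 2e\lambda_{i,j}$ for an arbitrary integer $\lambda_{i,j}$ amounts to inserting the identity morphism into the tensor evaluation and therefore preserves $Z$ on the nose. Because $\lambda_{i,j}$ ranges freely over $\mathbb{Z}$, we can achieve any value $a'_{i,j} \equiv a_{i,j} \pmod{2e}$, and in particular we can make $|a'_{i,j}| \ge 3$ with either sign. I would then choose each $\lambda_{i,j}$ so that $|a'_{i,j}| \ge 3$ and the sign of $a'_{i,j}$ agrees with whatever coherent pattern (say, all positive) renders the resulting standard $m$-plat alternating. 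Telescoping $br_{V,V}^{2e} = \mathrm{id}$ across the twist regions (together with the framing-matching choice in the previous paragraph) gives $Z(K'') = Z(K_0) = Z(K)$, and by construction $K''$ has all of the stated combinatorial properties.

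The main obstacle is really combinatorial bookkeeping: one must verify that a uniform sign choice on the $a'_{i,j}$ genuinely produces an alternating diagram in the standard $m$-plat convention (this is folklore but deserves an explicit check), and one must arrange the plat stabilizations and empty-row insertions without disturbing either the standard top/bottom plat matchings or the blackboard framing. Once these details are in place, the complexity bound is immediate: the conversion to plat form yields a diagram of size $O(c(K))$, the plat stabilizations and row insertions contribute $O(m(m-2))$ further twist regions, and the Vafa padding adds $O(e)$ half-twists per twist region. Hence the total size of $K''$, and the running time, is $\poly(e(V,V), c(K))$, as required.
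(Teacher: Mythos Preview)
Your overall strategy---put $K$ in standard $m$-plat form with $m\ge 3$ and enough rows, then Vafa-pad every twist region by a multiple of $2e(V,V)$---is exactly the paper's approach, and the complexity accounting is the same. Two details, however, don't quite work as written.

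First, the framing bookkeeping in your first step is muddled. You write that any ``residual writhe discrepancy will be absorbed into the Vafa-padded shifts in the final step,'' but this is precisely what Vafa padding \emph{cannot} do: adding $2e\lambda_{i,j}$ half-twists inserts $br_{V,V}^{2e\lambda_{i,j}}=\mathrm{id}$ and therefore leaves $Z$ unchanged, so it cannot cancel a stray factor of $\theta^w$ coming from a writhe mismatch. The paper sidesteps this entirely by reaching plat form via \emph{regular isotopy} (Reidemeister 2 moves, plus a cancelling pair of Reidemeister 1 moves if one needs to boost $m$ to $3$), so that $Z(K_1)=Z(K)$ on the nose with no discrepancy to absorb. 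Your route through Alexander's theorem and plat stabilization is workable, but you must explicitly keep the blackboard framing fixed throughout; you cannot defer this to the padding step.

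Second, your tentative sign pattern ``all positive'' does not make a standard $m$-plat alternating. In the standard plat layout the twist regions of consecutive rows are horizontally offset, and the alternating condition forces the sign of $a'_{i,j}$ to flip with the parity of the row (compare the $\pm 2T$ pattern in the paper's Figure~\ref{f:torus}). With that correction your sign-choice argument goes through; in fact the paper's own proof simply keeps the sign of each original $a_{i,j}$ and does not separately verify the alternating claim, so on this point you are being more careful than the paper.
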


\begin{proof}
Using Vafa's theorem, we fix an even constant $2T\ge e(V,V)>0$ (depending on $\cC$ and $V$) such that adding a string of $2T$ crossings (all with the same sign) anywhere in a diagram produces a new knot diagram with the same $Z$ invariant.  That is, we pick $T$ so that
\[
\begin{tikzpicture}[scale=0.18]
\draw (-3,0) node[left] {\large $Z($};
\draw[thick] plot[smooth, tension=.7] coordinates {(-3,1) (0,.7) (3,1) };
\draw[thick] plot[smooth, tension=.7] coordinates {(-3,-1) (0,-.7) (3,-1) };
\draw (3,0) node[right] {\large $)$};
\draw (6,0) node {$=$};
\begin{scope}[xshift= 13.5cm]
\draw (-3,0) node[left] {\large$Z($};
\draw[thick] plot[smooth, tension=.7] coordinates {(-3,1) (0,.7) (3,1) };
\draw[thick] plot[smooth, tension=.7] coordinates {(-3,-1) (0,-.7) (3,-1) };
\draw[thick,fill=white] (-2.2,-1.5) rectangle (2.2,1.2);
\draw (0,0) node {$\pm 2T$};
\draw (3,0) node[right] {\large$).$};
\end{scope}
\end{tikzpicture}
\]

Let $K = K_0$ be our initial (oriented, ribbon) knot diagram.  We reduce it to $K''$ in two steps.

First, if necessary, we apply a sequence of Reidemeister 1 and 2 moves to $K_0$ to get a regularly isotopic diagram $K_1$ that is a standard $m$-plat diagram where $m \ge 3$, and the number of rows $n$ is greater than $4m(m-2)$.  In detail, let $m$ be the number of local maxima in $K_0$ (which equals the number of local minima) with respect to some preferred coordinates on the diagram plane.  If $m < 3$, then introduce a pair of Reidemeister 1 moves whose framings cancel
(recall from Section \ref{ss:notation} that we use the blackboard framing).  The result will have $m \ge 3$.  Now apply a sequence of Reidemeister 2 moves to make a standard $m$-plat diagram.  Since we are not requiring $K_1$ be highly twisted, if necessary, we introduce twist regions with coefficients $a_{i,j}=0$ to ensure that $n > 4m(m-2)$.  Since $K_1$ is regularly isotopic to $K_0$, $Z(K_1) = Z(K_0)$.  This step takes quadratic time in the crossing number of $K_0$ and does not depend on $\cC$ or $V$.

Second, we increase the twist coefficients of $K_1$ in order to get a highly twisted alternating diagram.  Let the twist coefficients of $K_1$ be $a_{i,j}$.  Build $K_2$ by replacing each $a_{i,j}$ in $K_1$ with $a_{i,j}'$ as follows:
\[ a_{i,j}' = \begin{cases} a_{i,j} + 2T & \text{ if $a_{i,j} \ge 0$,} \\ a_{i,j} - 2T & \text{ otherwise.}\end{cases}\]
This step takes a linear amount of time in the crossing number of $K_1$ (note that if $a_{i,j}=0$, we still count it as a single ``crossing").

Let $K''=K_2$.  It is evident from the construction that $K''$ has the same number of plats and rows as $K_1$, and the assumption $2T > 2$ guarantees $|a'_{i,j}| \ge 3$ for all $i,j$.  Therefore, $K''$ is a highly-twisted standard $m$-plat with $n$ rows where $m \ge 3$ and $n > 4m(m-2)$.  Of course, by our choice of $T$,
\[ Z(K'') = Z(K_2) = Z(K_1) = Z(K_0) = Z(K).\]
The entire procedure taking $K$ to $K''$ takes quadratic time in the crossing number of $K$ and is linear in $T$.  Thus if we also allow $\cC$ and $V$ to vary, then the reduction runs in time that is jointly polynomial in the self-braiding exponent $e(V,V)$ and the crossing number of $K$.
\end{proof}

\subsection{Proof of Theorem \ref{thm:reduction1}}
\label{ss:reduction1}

As in the previous subsection, fix an even constant $2T\ge e(V,V)>2$ (depending on $\cC$ and $V$) such that adding $2T$ consecutive crossings anywhere in the diagram produces a new knot diagram with the same $Z$ invariant.

Let $K$ be our initial diagram.  We must describe a polynomial time algorithm that builds a diagram $K'$ that is alternating, has minimal crossing number, is hyperbolic, and satisfies
\[ Z(K') = \theta^{r(K)} Z(K)\]
for some polynomial time computable function $r(K)$.  We perform the reduction in three steps, with a conditional fourth step.  Let $K_0=K$. 

First, we alter some of the crossings of $K_0$ in order to get an alternating diagram $K_1$ with $Z(K_1) = Z(K_0)$.  This is performed as follows.  If $K_0$ is not alternating, then in linear time we may identify a nonempty set of crossings $v_1, v_2, \dots, v_k$ such that reversing them yields an alternating diagram.  However, doing this reversal will not necessarily preserve $Z$ (unless $T=1$).  Instead, for each of the positive crossings that need to be made negative, we replace them with $2T-1$ negative crossings, and vice versa for negative crossings that need to be made positive.  The resulting diagram $K_2$ is an alternating knot diagram, and each of these crossing replacements can be understood in two steps: insert $2T$ additional twists in the opposite direction, then apply a Reidemeister 2 move.  Thus $Z(K_1) = Z(K_0)$.  This entire padding process takes linear time both in the crossing number of $K_1$ and in $T$.

Second, we use Lemma \ref{l:reduced} to reduce $K_1$ to a minimal crossing alternating diagram $K_2$ in polynomial time.  The removal of each nugatory crossing in the reduction algorithm changes the writhe of $K_1$ by $\pm 1$.  If we let $r'(K)$ be the total change in the writhe $w(K_2)-w(K_1)$, then
\[ Z(K_2) = \theta^{w(K_2)-w(K_1)}Z(K_1) = \theta^{r'(K)}Z(K).\]
Clearly $r'(K)$ is computable in polynomial time from $K$.  Except for the value of $\theta$, this step does not depend on $\cC$ and $V$.

Third, we use Lemma \ref{l:prime} to identify the diagrammatic connected sum decomposition $K_2 = \#_{i=1}^k P_i$ and then perform some additional twist padding to create a new diagram $K_3$ from $K_2$ that is reduced, prime, and alternating.  In detail, we build $K_3$ as follows.  If $k=0$ or 1 (that is, if $K_2$ is trivial or prime already), then simply set $K_3=K_2$.  If $k>1$, then by uniqueness of connected sum decompositions we may assume without loss of generality that $K_2$ looks like this:
\[
\begin{tikzpicture}[thick]
\draw[dashed] (1,0) circle [radius=1.2];
\draw (1,-1.2) node[below] {$P_1$};

\draw (1.6,0) -- (3.4,0);
\draw (1.8,-.6) -- (1.8,.2);
\draw (3.2,-.6) -- (3.2,.2);

\draw[dotted] (1.2,0) -- (1.6,0);
\draw[dotted] (3.4,0) -- (3.8,0);
\draw[dotted] (1.8,-.5) -- (1.8,-.2);
\draw[dotted] (1.8,.5) -- (1.8,.2);
\draw[dotted] (3.2,-.5) -- (3.2,-.2);
\draw[dotted] (3.2,.5) -- (3.2,.2);

\draw[dotted] (1.8,-.6) arc [start angle = 0, end angle = -90, radius=.25];
\draw (3.2,-.5) -- (3.2,-.6);
\draw[dotted] (3.2,-.6) arc [start angle = 180, end angle = 270, radius=.25];

\draw[dashed] (4,0) circle [radius=1.2];
\draw (4,-1.2) node[below] {$P_2$};

\begin{scope}[xshift=3cm]
\draw (1.6,0) -- (3.4,0);
\draw (1.8,-.6) -- (1.8,.2);
\draw (3.2,-.6) -- (3.2,.2);
\draw[dotted] (1.2,0) -- (1.6,0);
\draw[dotted] (3.4,0) -- (3.8,0);
\draw[dotted] (1.8,-.5) -- (1.8,-.2);
\draw[dotted] (1.8,.5) -- (1.8,.2);
\draw[dotted] (3.2,-.5) -- (3.2,-.2);
\draw[dotted] (3.2,.5) -- (3.2,.2);

\draw[dotted] (1.2,0) -- (1.6,0);
\draw[dotted] (3.4,0) -- (3.8,0);
\draw[dotted] (1.8,-.5) -- (1.8,-.2);
\draw[dotted] (1.8,.5) -- (1.8,.2);
\draw[dotted] (3.2,-.5) -- (3.2,-.2);
\draw[dotted] (3.2,.5) -- (3.2,.2);

\draw[dotted] (1.8,-.6) arc [start angle = 0, end angle = -90, radius=.25];
\draw (3.2,-.5) -- (3.2,-.6);
\draw[dotted] (3.2,-.6) arc [start angle = 180, end angle = 270, radius=.25];
\end{scope}

\draw[dashed] (7,0) circle [radius=1.2];
\draw (7,-1.2) node[below] {$P_3$};

\begin{scope}[xshift=6cm]
\draw (1.6,0) -- (2.75,0);
\draw[dotted] (2.75,0) -- (3.25,0);
\draw (1.8,-.6) -- (1.8,.2);
\draw[dotted] (1.2,0) -- (1.6,0);
\draw[dotted] (1.8,-.5) -- (1.8,-.2);
\draw[dotted] (1.8,.5) -- (1.8,.2);

\draw[dotted] (1.8,-.6) arc [start angle = 0, end angle = -90, radius=.25];

\end{scope}

\begin{scope}[xshift=7cm]
\draw (2.25,0) -- (3.4,0);
\draw (3.2,-.6) -- (3.2,.2);
\draw[dotted] (3.4,0) -- (3.8,0);
\draw[dotted] (3.2,-.5) -- (3.2,-.2);
\draw[dotted] (3.2,.5) -- (3.2,.2);

\draw[dotted] (3.2,-.6) arc [start angle = 180, end angle = 270, radius=.25];

\end{scope}

\draw[dashed] (11,0) circle [radius=1.2];
\draw (11,-1.2) node[below] {$P_k$};

\draw[dotted] (-.2,0) -- (.3,0);
\draw (0,0) -- (-.2,0) .. controls (-1.2,0) and (-1.2,2) .. (-.2,2) -- (12.2,2) .. controls (13.2,2) and (13.2,0) .. (12.2,0) -- (12,0);
\draw[dotted] (12.2,0) -- (11.7,0);
\end{tikzpicture}
\]
where we've suppressed the crossing data for convenience.  We pad this diagram with additional crossings as follows and call the result $K_3$:
\[
\begin{tikzpicture}[thick]

\draw[dashed] (1,0) circle [radius=1.2];
\draw (1,-1.2) node[below] {$P_1$};

\draw (1.6,0) -- (3.4,0);
\draw (1.8,-.2) -- (1.8,.2);
\draw (3.2,-.2) -- (3.2,.2);

\draw[dotted] (1.2,0) -- (1.6,0);
\draw[dotted] (3.4,0) -- (3.8,0);
\draw[dotted] (1.8,-.5) -- (1.8,-.2);
\draw[dotted] (1.8,.5) -- (1.8,.2);
\draw[dotted] (3.2,-.5) -- (3.2,-.2);
\draw[dotted] (3.2,.5) -- (3.2,.2);

\draw[dashed] (4,0) circle [radius=1.2];
\draw (4,-1.2) node[below] {$P_2$};

\begin{scope}[xshift=3cm]
\draw (1.6,0) -- (3.4,0);
\draw (1.8,-.2) -- (1.8,.2);
\draw (3.2,-.2) -- (3.2,.2);
\draw[dotted] (1.2,0) -- (1.6,0);
\draw[dotted] (3.4,0) -- (3.8,0);
\draw[dotted] (1.8,-.5) -- (1.8,-.2);
\draw[dotted] (1.8,.5) -- (1.8,.2);
\draw[dotted] (3.2,-.5) -- (3.2,-.2);
\draw[dotted] (3.2,.5) -- (3.2,.2);
\end{scope}

\draw[dashed] (7,0) circle [radius=1.2];
\draw (7,-1.2) node[below] {$P_3$};

\begin{scope}[xshift=6cm]
\draw (1.6,0) -- (2.75,0);
\draw[dotted] (2.75,0) -- (3.25,0);
\draw (1.8,-.2) -- (1.8,.2);
\draw[dotted] (1.2,0) -- (1.6,0);
\draw[dotted] (1.8,-.5) -- (1.8,-.2);
\draw[dotted] (1.8,.5) -- (1.8,.2);
\end{scope}

\begin{scope}[xshift=7cm]
\draw (2.25,0) -- (3.4,0);
\draw (3.2,-.2) -- (3.2,.2);
\draw[dotted] (3.4,0) -- (3.8,0);
\draw[dotted] (3.2,-.5) -- (3.2,-.2);
\draw[dotted] (3.2,.5) -- (3.2,.2);
\end{scope}

\draw[dashed] (11,0) circle [radius=1.2];
\draw (11,-1.2) node[below] {$P_k$};

\draw[dotted] (-.2,0) -- (.3,0);
\draw (0,0) -- (-.2,0) .. controls (-1.2,0) and (-1.2,2) .. (-.2,2) -- (12.2,2) .. controls (13.2,2) and (13.2,0) .. (12.2,0) -- (12,0);
\draw[dotted] (12.2,0) -- (11.7,0);

\draw[fill=white] (1.7,-.5) rectangle (3.3,-.2);
\draw (2.5,-.35) node{\small$\epsilon_1 2T$};
\draw (1.8,-.5) -- (1.8,-.6);
\draw[dotted] (1.8,-.6) arc [start angle = 0, end angle = -90, radius=.25];
\draw (3.2,-.5) -- (3.2,-.6);
\draw[dotted] (3.2,-.6) arc [start angle = 180, end angle = 270, radius=.25];

\begin{scope}[xshift=3cm]
\draw[fill=white] (1.7,-.5) rectangle (3.3,-.2);
\draw (2.5,-.35) node{\small$\epsilon_2 2T$};
\draw (1.8,-.5) -- (1.8,-.6);
\draw[dotted] (1.8,-.6) arc [start angle = 0, end angle = -90, radius=.25];
\draw (3.2,-.5) -- (3.2,-.6);
\draw[dotted] (3.2,-.6) arc [start angle = 180, end angle = 270, radius=.25];
\end{scope}

\begin{scope}[xshift=6cm]
\draw[draw=white,fill=white] (1.7,-.5) rectangle (2.8,-.2);
\draw (2.2,-.2) -- (1.7,-.2) -- (1.7,-.5) -- (2.2,-.5);
\draw[dashed] (2.7,-.2) -- (2.2,-.2);
\draw[dashed] (2.7,-.5) -- (2.2,-.5);
\draw (2.2,-.35) node{\small$\epsilon_3 2T$};
\draw (1.8,-.5) -- (1.8,-.6);
\draw[dotted] (1.8,-.6) arc [start angle = 0, end angle = -90, radius=.25];
\end{scope}

\begin{scope}[xshift=7cm]
\draw[draw=white,fill=white] (2.2,-.5) rectangle (3.3,-.2);
\draw[dashed] (2.3,-.2) -- (2.8,-.2);
\draw[dashed] (2.3,-.5) -- (2.8,-.5);
\draw (2.8,-.2) -- (3.3,-.2) -- (3.3,-.5) -- (2.8,-.5);
\draw (2.8,-.35) node{\small$\epsilon_{k-1} 2T$};
\draw (3.2,-.5) -- (3.2,-.6);
\draw[dotted] (3.2,-.6) arc [start angle = 180, end angle = 270, radius=.25];
\end{scope}

\end{tikzpicture}
\]
Here in diagram $K_3$ each $\epsilon_i$ equals either $+1$ or $-1$; we take whichever sign is necessary to ensure $K_3$ is alternating.  This choice depends in a simple way on the two crossings involved in connected-summing $P_i$ and $P_{i+1}$, with two possible cases.  Namely, if the arc leaving $P_i$ enters $P_{i+1}$ as an under-crossing (hence, left $P_i$ from an over-crossing), then $\epsilon_i = -1$; otherwise $\epsilon_i= +1$.  These $\epsilon_i$'s can be determined in polynomial time, and hence $K_3$ can be built from $K_2$ in polynomial time.  Clearly $Z(K_3) = Z(K_2)$, and it is straightforward to check that $K_3$ is reduced and prime.  Note that this step is linear in $T$.

Finally, by Lemma \ref{l:menasco} $K_3$ is either hyperbolic, the unique minimal crossing alternating diagram of a $(2,p)$ torus knot for some odd $p$, or the trivial unknot diagram, and we can recognize which of these is the case in time polynomial in the crossing number of $K_3$.

If $K_3$ is hyperbolic (this is by far the likeliest outcome), then we let $K'=K_3$ and $r(K) = r'(K)$.

If $K_3$ is the minimal crossing alternating $(2,p)$ torus knot diagram, then we use the Vafa padding trick and Proposition \ref{p:twisted} as in Section \ref{ss:reduction2} to build a highly twisted alternating knot $K_4$ that has no nugatory crossings, satisfies the conditions of Proposition \ref{p:twisted} necessary to guarantee it is hyperbolic, and has $Z(K_4) = \theta \cdot Z(K_3)$.  Indeed, if $p>0$, pick any one of the diagrams as in Figure \ref{f:torus} with at least $n>4\times 3 (3-2) = 12$ rows.  If $p<0$, do the same, but switch all of the signs in the picture.  This reduction works in linear time in $p$.  Now let $K'=K_4$ and $r(K) = r'(K) + 1$.

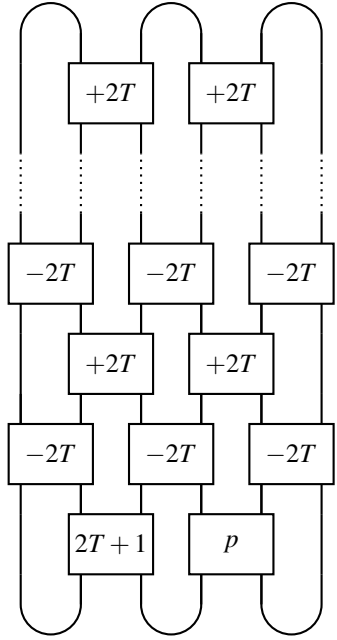
\begin{figure}
\begin{tikzpicture}[thick,scale=0.8]
\draw (0,0) -- (0,-.5) arc [radius = 0.5, start angle = 180, end angle = 360] -- (1,0);
\draw (2,0) -- (2,-.5) arc [radius = 0.5, start angle = 180, end angle = 360] -- (3,0);
\draw (4,0) -- (4,-.5) arc [radius = 0.5, start angle = 180, end angle = 360] -- (5,0);

\draw (0,0) -- (0,1);
\draw (5,0) -- (5,1);

\draw (.8,0) rectangle (2.2,1);
\draw (1.5,.5) node {$2T+1$};
\draw (1,0) -- (1,-.5);
\draw (2,0) -- (2,-.5);
\draw (1,1) -- (1,1.5);
\draw (2,1) -- (2,1.5);

\begin{scope}[xshift = 2cm, yshift = 0]
\draw (.8,0) rectangle (2.2,1);
\draw (1.5,.5) node {$p$};
\draw (1,0) -- (1,-.5);
\draw (2,0) -- (2,-.5);
\draw (1,1) -- (1,1.5);
\draw (2,1) -- (2,1.5);
\end{scope}

\begin{scope}[xshift = -1cm, yshift = 1.5cm]
\draw (.8,0) rectangle (2.2,1);
\draw (1.5,.5) node {$-2T$};
\draw (1,0) -- (1,-.5);
\draw (2,0) -- (2,-.5);
\draw (1,1) -- (1,1.5);
\draw (2,1) -- (2,1.5);
\end{scope}

\begin{scope}[xshift = 1cm, yshift = 1.5cm]
\draw (.8,0) rectangle (2.2,1);
\draw (1.5,.5) node {$-2T$};
\draw (1,0) -- (1,-.5);
\draw (2,0) -- (2,-.5);
\draw (1,1) -- (1,1.5);
\draw (2,1) -- (2,1.5);
\end{scope}

\begin{scope}[xshift = 3cm, yshift = 1.5cm]
\draw (.8,0) rectangle (2.2,1);
\draw (1.5,.5) node {$-2T$};
\draw (1,0) -- (1,-.5);
\draw (2,0) -- (2,-.5);
\draw (1,1) -- (1,1.5);
\draw (2,1) -- (2,1.5);
\end{scope}

\begin{scope}[yshift = 3cm]

    \draw (.8,0) rectangle (2.2,1);
    \draw (1.5,.5) node {$+2T$};
    \draw (1,0) -- (1,-.5);
    \draw (2,0) -- (2,-.5);
    \draw (1,1) -- (1,1.5);
    \draw (2,1) -- (2,1.5);
    
    \begin{scope}[xshift = 2cm, yshift = 0]
    \draw (.8,0) rectangle (2.2,1);
    \draw (1.5,.5) node {$+2T$};
    \draw (1,0) -- (1,-.5);
    \draw (2,0) -- (2,-.5);
    \draw (1,1) -- (1,1.5);
    \draw (2,1) -- (2,1.5);
    \end{scope}
    
    \begin{scope}[xshift = -1cm, yshift = 1.5cm]
    \draw (.8,0) rectangle (2.2,1);
    \draw (1.5,.5) node {$-2T$};
    \draw (1,0) -- (1,-.5);
    \draw (2,0) -- (2,-.5);
    \draw (1,1) -- (1,1.5);
    \draw (2,1) -- (2,1.5);
    \end{scope}
    
    \begin{scope}[xshift = 1cm, yshift = 1.5cm]
    \draw (.8,0) rectangle (2.2,1);
    \draw (1.5,.5) node {$-2T$};
    \draw (1,0) -- (1,-.5);
    \draw (2,0) -- (2,-.5);
    \draw (1,1) -- (1,1.5);
    \draw (2,1) -- (2,1.5);
    \end{scope}
    
    \begin{scope}[xshift = 3cm, yshift = 1.5cm]
    \draw (.8,0) rectangle (2.2,1);
    \draw (1.5,.5) node {$-2T$};
    \draw (1,0) -- (1,-.5);
    \draw (2,0) -- (2,-.5);
    \draw (1,1) -- (1,1.5);
    \draw (2,1) -- (2,1.5);
    \end{scope}
    
    \draw[dotted] (0,3) -- (0,4);
    \draw[dotted] (1,3) -- (1,4);
    \draw[dotted] (2,3) -- (2,4);
    \draw[dotted] (3,3) -- (3,4);
    \draw[dotted] (4,3) -- (4,4);
    \draw[dotted] (5,3) -- (5,4);
    
    \draw (0,-.5) -- (0,1);
    \draw (5,0) -- (5,1);

\end{scope}

\begin{scope}[yshift = 7.5cm]

    \draw (.8,0) rectangle (2.2,1);
    \draw (1.5,.5) node {$+2T$};
    \draw (1,0) -- (1,-.5);
    \draw (2,0) -- (2,-.5);
    \draw (1,1) -- (1,1.5);
    \draw (2,1) -- (2,1.5);
    
    \begin{scope}[xshift = 2cm, yshift = 0]
    \draw (.8,0) rectangle (2.2,1);
    \draw (1.5,.5) node {$+2T$};
    \draw (1,0) -- (1,-.5);
    \draw (2,0) -- (2,-.5);
    \draw (1,1) -- (1,1.5);
    \draw (2,1) -- (2,1.5);
    \end{scope}
    
    \draw (0,-.5) -- (0,1);
	\draw (5,-.5) -- (5,1);

\end{scope}

\begin{scope}[yshift=8.5cm,yscale=-1]
\draw (0,0) -- (0,-.5) arc [radius = 0.5, start angle = 180, end angle = 360] -- (1,0);
\draw (2,0) -- (2,-.5) arc [radius = 0.5, start angle = 180, end angle = 360] -- (3,0);
\draw (4,0) -- (4,-.5) arc [radius = 0.5, start angle = 180, end angle = 360] -- (5,0);
\end{scope}
\end{tikzpicture}

\caption{Padding a $(2,p)$ torus knot to make it alternating and hyperbolic.}
\label{f:torus}
\end{figure}

If $K_3$ is the trivial unknot diagram, then by similar reasoning setting $p=2T+1$ in Figure \ref{f:torus} and choosing such a diagram with $n>12$ rows yields an alternating hyperbolic knot diagram $K_4$ with no nugatory crossings and $Z(K_4) = \theta^{2}Z(K_3)$ for some polynomial time $r''(K)$.  Let $K' = K_4$ and $r(K) = r''(K) + 2$.  (We note that there are arguably ``better" ways to cover this case.  Our approach here has the benefit of being able to reuse Figure \ref{f:torus}.)

In either of these three cases, the dependence on $T$ is linear.  Thus, if we allow $\cC$ and $V$ to vary, the reduction works in time polynomial in $e(V,V)$.
\qed

\subsection{Proof of Corollary \ref{c:main}}
\label{ss:cmain}
Let $Z(-)$ be any one of the invariants considered in Corollary \ref{c:main}.  So, for example, we could have $Z(-) = V(-,e^{2 i \pi/7})$ or $Z(-) = \#H(-,A_5,(1\ 2\ 3\ 4\ 5)^{A_5})$ where $(1\ 2\ 3\ 4\ 5)^{A_5}$ is the conjugacy class of $(1 \ 2 \ 3\ 4 \ 5)$ in $A_5$.  Note that, at least in this subsection, we fix \emph{one} such $Z$, we do not work uniformly in all of them.

Recall that for a fixed root of unity $q$, the value of the Jones polynomial $V(K,q)$ can be identified (after some minor normalization) with the Witten-Reshetikhin-Turaev invariant of $K$ determined from the modular fusion category $U_q(\mathfrak{sl}_2)$-mod by coloring $K$ with the fundamental representation $V$ of the quantum group $U_q(\mathfrak{sl}_2)$.  Likewise, $\#H(K,G,C)$ can be computed as a WRT invariant using the Dijkgraaf-Witten TQFT based on the modular fusion category $DG$-mod, where $DG$ is the untwisted Drinfeld double of $G$.   Thus, Theorems \ref{thm:reduction1} and \ref{thm:reduction2} apply to these invariants.

In caricature, each of the main results of \cite{FreedmanLarsenWang:two}, \cite{Kuperberg:Jones} and \cite{me:coloring} is proved using a similar strategy: given $Z$, one first finds some model of reversible circuits and associated problem $L$ that is hard.  Then given a circuit $C$ in this model, one constructs a polynomial time reduction to a knot diagram $K_C$ such that computing (or approximating) $|Z(K_C)|$ allows one to compute $L(C)$.  Of course, there is significant work involved in proving that these reductions are possible, but this is all we need to know, since it means we can prove Corollary \ref{c:main} if we can accomplish the following: 
\begin{quote}
\emph{Fix $Z$ and fix $i=1$ or $2$.  Provide a classical polynomial time algorithm that converts a given knot diagram $K$ to another diagram $K'$ (possibly of a different knot!) so that $|Z(K')| = |Z(K)|$ and $K'$ satisfies all of the desired promises in Theorem $i$.}
\end{quote}
Indeed, if we can do this, then we can first reduce a circuit $C$ to the knot diagram $K_C$ exactly as in \cite{FreedmanLarsenWang:two}, \cite{Kuperberg:Jones} or \cite{me:coloring} (as appropriate, depending on $Z$), and then simply reduce $K_C$ to $(K_C)'$.  Theorems \ref{thm:reduction1} and \ref{thm:reduction2} say this is possible. \qed

\subsection{Proof of Corollary \ref{c:uniform}}
\label{ss:uniformity}
Fix a polynomial $p(n)$ as in the statement of Corollary \ref{c:uniform}.  Then the main result of \cite{AharonovArad:uniformity} implies that the problem of additively approximating $V(L,t)$ is $\BQP$-hard when $L$ is a \emph{link} diagram and $t=e^{2\pi i/p(n)}$.  Using the same observation that Kuperberg made in \cite{Kuperberg:Jones} in order to improve the Freedman-Larsen-Wang hardness results from \emph{links} to \emph{knots} (namely, that braid density implies \emph{pure} braid density, since the pure braid subgroup $PB_n$ is finite index in $B_n$), we may improve the result of \cite{AharonovArad:uniformity} from links to knots.

In caricature, this hardness result is proved by converting a quantum circuit $C$ on $m$ gates into a knot diagram $K_C$ with $n=\poly(m)$ crossings and $t=e^{2\pi i/p(n)}$ such that an ``additive approximation" of $V(K_C,t)$ approximates the acceptance probability of $C$.  As in the previous subsection, $V(K,t)$ can be identified (up to some minor normalization) with the Witten-Reshetikhin-Turaev invariant of $K$ determined from the modular fusion category $U_q(\mathfrak{sl}_2)$-mod by coloring $K$ with the fundamental representation $V$ of the quantum group $U_q(\mathfrak{sl}_2)$, except notice that now $q = t = e^{2\pi i/p(n)}$ depends on $n$.  

For $V$ the defining representation in the modular fusion category $U_q(\mathfrak{sl}_2)$-mod and $q=e^{2\pi i/k}$, it turns out that $e(V,V) = O(\poly(k))$.  Applying the uniformity statements in Theorems \ref{thm:reduction1} and \ref{thm:reduction2} completes the proof. \qed

\end{document}